\newtheorem{lem}{Lemma}[section]
\newtheorem{thm}[lem]{Theorem}
\title{Scheduling Coflows for Minimizing the Total Weighted Completion Time in Identical Parallel Networks}
\author{Chi-Yeh~Chen 
\\ Department of Computer Science and Information
Engineering, \\ National Cheng Kung University, \\
Taiwan, ROC. \\
chency@csie.ncku.edu.tw.}
\begin{document}

\maketitle
\begin{abstract}
Coflow is a recently proposed network abstraction to capture communication patterns in data centers. The coflow scheduling problem in large data centers is one of the most important $NP$-hard problems. Previous research on coflow scheduling focused mainly on the single-switch model. However, with recent technological developments, this single-core model is no longer sufficient. This paper considers the coflow scheduling problem in identical parallel networks. The identical parallel network is an architecture based on multiple network cores running in parallel. Coflow can be considered as divisible or indivisible. Different flows in a divisible coflow can be transmitted through different network cores. Considering the divisible coflow scheduling problem, we propose a $(6-\frac{2}{m})$-approximation algorithm with arbitrary release times, and a $(5-\frac{2}{m})$-approximation without release time, where $m$ is the number of network cores. On the other hand, when coflow is indivisible, we propose a $(4m+1)$-approximation algorithm with arbitrary release times, and a $(4m)$-approximation without release time.

\begin{keywords}
Scheduling algorithms, approximation algorithms, coflow, datacenter network, identical parallel network.
\end{keywords}
\end{abstract}

\section{Introduction}\label{sec:Introduction}
Over the past decade, large data centers have become the dominant form of computing infrastructure. Numerous studies~\cite{Chowdhury2014, Chowdhury2015, Zhang2016, Agarwal2018} have demonstrated the benefits of application-aware network scheduling by exploiting the structured traffic patterns of distributed applications in data centers. Data-parallel computing applications such as MapReduce~\cite{Dean2008}, Hadoop~\cite{Shvachko2010}, and Spark~\cite{zaharia2010spark} consist of multiple stages of computation and communication. The success of these applications has led to a proliferation of applications designed to alternate between computational and communication stages. In data-parallel computing applications, computation involves only local operations in the server. However, much of the intermediate data (flows) generated during the computation stage need to be transmitted across different machines during the communication stage for further processing. As the number of applications increases, data centers require more data transfer capability. In these data transfers, the collective impact of all flows between the two machine groups becomes important. This collective communication pattern in the data center is abstracted by coflow traffic~\cite{Chowdhury2012}.

A coflow represents a collection of related flows whose completion time is determined by the completion time of the last flow in the collection~\cite{shafiee2018improved}. A data center can be modeled as giant $N \times N$ non-blocking switch, with $N$ input links connected to $N$ source servers and $N$ output links connected to $N$ destination servers. Moreover, we assume that the transmission speed of each port is uniform. This modeling allows us to focus only on scheduling tasks, rather than routing flows. Each coflow can be represented by a $N\times N$ integer matrix $D=\left(d_{ij}\right)_{i,j=1}^{N}$, where entry $d_{ij}$ represents the number of data units that must be transferred from input port $i$ to output port $j$. Each coflow also has a weight and a release time. Weightings can capture different priorities for different coflows.

Previous research on coflow scheduling focused mainly on the single-switch model, which has been widely used in coflow studies. However, with recent technological developments, this single-core model is no longer sufficient. In fact, a growing data center would operate on multiple networks in parallel to increase the efficiency of the network~\cite{Singh2015, Huang2020}. We consider the identical parallel network, an architecture based on multiple network cores running in parallel. Parallel networks provide a large amount of aggregated bandwidth by serving traffic at the same time. The goal of this paper is to schedule coflows in the identical parallel networks such that the weighted completion time is minimized. Coflow can be considered as divisible or indivisible. Different flows in a divisible coflow can be transmitted through different network cores. We assume that divisible coflows are transmitted at the flow level, so that data in a flow are distributed to the same network core. On the other hand, flows in an indivisible coflow can only be transmitted through the same network core.

\subsection{Related Work}
Chowdhury and Stoica~\cite{Chowdhury2012} first introduced the coflow abstraction to capture communication patterns in data centers. Since then, many related studies have been proposed in the literature to schedule coflows, e.g. \cite{Chowdhury2014, Chowdhury2015, Qiu2015, zhao2015rapier, shafiee2018improved, ahmadi2020scheduling}. 
It is well known that the concurrent open shop problem is NP-hard to approximate within a factor better than $2-\epsilon$ for any $\epsilon>0$~\cite{Sachdeva2013, shafiee2018improved}. Since the coflow scheduling problem generalizes the well-studied concurrent open shop scheduling problem, it is also NP-hard to approximate within a factor better than $2-\epsilon$~\cite{ahmadi2020scheduling, Bansal2010, Sachdeva2013}. Qiu \textit{et al.}~\cite{Qiu2015} developed deterministic $\frac{64}{3}$ approximation and randomized $(8+\frac{16\sqrt{2}}{3})$-approximation algorithms for the problem of minimizing the weighted completion time of coflows. In the coflow scheduling problem with release dates, they claimed a deterministic $\frac{67}{3}$-approximation and a randomized $(9+\frac{16\sqrt{2}}{3})$-approximation algorithms. However, Ahmadi \textit{et al.}~\cite{ahmadi2020scheduling} proved that their technique actually yields only a deterministic $\frac{76}{3}$-approximation algorithm for coflow scheduling with release times. Khuller \textit{et al.}~\cite{khuller2016brief} also developed an approximation algorithm for coflow scheduling with arbitrary release times with a ratio of $12$. Moreover, when all coflows had release dates equal to zero, they obtained a deterministic 8-approximation and a randomized $3+2\sqrt{2} \approx 5.83$-approximation. In recent work, Shafiee and Ghaderi~\cite{shafiee2018improved} obtained a 5-approximation algorithm with arbitrary release times, and a 4-approximation algorithm without release time. Ahmadi \textit{et al.}~\cite{ahmadi2020scheduling} proposed a primal-dual algorithm to improve running time for the coflow scheduling problem. Huang \textit{et al.}~\cite{Huang2020} considered scheduling a single coflow on a heterogeneous parallel network and proposed a $O(m)$-approximation algorithm, where $m$ is the number of network cores.

\subsection{Our Contributions}
This paper considers the coflow scheduling problem in identical parallel networks. In the divisible coflow scheduling problem, we first propose a $(6-\frac{2}{m})$-approximation algorithm with arbitrary release times, and a $(5-\frac{2}{m})$-approximation without release time, where $m$ is the number of network cores. When coflow is indivisible, we propose a $(4m+1)$-approximation algorithm with arbitrary release times, and a $(4m)$-approximation without release time.

\subsection{Organization}
The rest of this article is organized as follows. Section \ref{sec:Preliminaries} introduces basic notations and preliminaries. Section \ref{sec:Algorithm1} presents an algorithm for divisible coflow scheduling. Section \ref{sec:Algorithm2} presents an algorithm for indivisible coflow scheduling. Section~\ref{sec:Results} compares the performance of the previous algorithms with that of the proposed algorithm. Section \ref{sec:Conclusion} draws conclusions.

\section{Notation and Preliminaries}\label{sec:Preliminaries}
This work abstracts the identical parallel networks as a set $M$ of $m$ giant $N \times N$ non-blocking switch, with $N$ input links connected to $N$ source servers and $N$ output links connected to $N$ destination servers. Each switch represents a network core. This abstract model is simple and practical, as topological designs such as Fat-tree or Clos~\cite{al2008scalable} enable the construction of networks with fully bisection bandwidth. In $N$ source servers (or destination servers), the $i$-th source server (or $j$-th destination server) is connected to the $i$-th input (or $j$th output) port of each parallel switch. Therefore, each source server (or destination server) has $m$ simultaneous uplinks (or downlinks). Each uplink (or downlink) can be a bundle of multiple physical links in the actual topology~\cite{Huang2020}. Let $\mathcal{I}$ be the source server set and $\mathcal{J}$ be the destination server set. The network core can be viewed as a bipartite graph, with $\mathcal{I}$ on one side and $\mathcal{J}$ on the other side. For simplicity, we assume that all network cores are the same and all links in each network core have the same capacity (or the same speed).

A coflow consists of a set of independent flows whose completion time is determined by the completion time of the latest flow in the set. The coflow $k$ can be expressed as $N\times N$ demand matrix $D^{(k)}=\left(d_{ijk}\right)_{i,j=1}^{N}$ where $d_{ijk}$ denote the size of the flow to be transferred from input $i$ to output $j$ in coflow $k$. In other words, each flow is a triple $(i, j, k)$, where $i \in \mathcal{I}$ is its source node and $j \in \mathcal{J}$ is its destination node, $k$ is the coflow to which it belongs. Moreover, we assume that flows consist of discrete data units, so their sizes are integers. For simplicity, we assume that all flows in a coflow arrive at the system at the same time (as shown in~\cite{Qiu2015}).

This work considers the following offline coflow scheduling problem with release dates. There is a set of coflows denoted by $\mathcal{K}$. Coflow $k$ is released to the system at time $r_k$, $k = 1, 2, \ldots, |\mathcal{K}|$, which means it can only be scheduled after time $r_k$. Let $C_k$ be the completion time of coflow $k$, that is, the time for all flows of coflow $k$ to finish processing. Each coflow $k\in \mathcal{K}$ has a positive weight $w_{k}$. Weights can capture different priorities for different coflows. The higher the weight, the higher the priority. The goal is to schedule coflows in an identical parallel network to minimize $\sum_{k\in \mathcal{K}} w_{k}C_{k}$, the total weighted completion time of the coflow. If all weights are equal, the problem is equivalent to minimizing the average coflow completion time. Table~\ref{tab:notations} presents the notation and terminology that are used herein.
\begin{table}[ht]
\caption{Notation and Terminology}
\vspace{2mm}
    \centering
        \begin{tabular}{||c|p{4in}||}
    \hline
     $m$      & The number of network cores.          \\
    \hline    
     $N$      & The number of input/output ports.         \\
    \hline
     $\mathcal{I}, \mathcal{J}$ & The source server set and the destination server set.         \\
    \hline    
     $\mathcal{K}$ & The set of coflows.         \\
    \hline
     $D^{(k)}$     & The demand matrix of coflow $k$. \\
    \hline    
     $d_{ijk}$     & The size of the flow to be transferred from input $i$ to output $j$ in coflow $k$.   \\
    \hline     
     $C_k$     & The completion time of coflow $k$.   \\
    \hline     
     $C_{ijk}$ & The completion time of flow $(i, j, k)$. In the analysis of approximation algorithm for divisible coflow scheduling, we use $C_{f}$ to represent $C_{ijk}$. \\
    \hline     
     $r_k$     & The released time of coflow $k$.  \\
    \hline     
     $w_{k}$   &  The weight of coflow $k$. \\
    \hline     
		 $\bar{C}_1, \ldots, \bar{C}_n$ & An optimal solution to the linear program. \\
		\hline 
		$\tilde{C}_{1}, \ldots, \tilde{C}_{n}$ & The schedule solution to our algorithm. \\
		\hline 
        \end{tabular}
    \label{tab:notations}
\end{table}


\section{Approximation Algorithm for Divisible Coflow Scheduling}\label{sec:Algorithm1}
In this section, coflows are considered divisible, where different flows in a coflow can be transmitted through different cores. We assume that divisible coflows are transmitted at the flow level, so that data in a flow are distributed to the same core. Let $\mathcal{K}_{i}=\left\{(k, j)| d_{ijk}>0, \forall k\in \mathcal{K}, \forall j\in \mathcal{J} \right\}$ be the set of flows with source $i$. Let $\mathcal{K}_{j}=\left\{(k, i)| d_{ijk}>0, \forall k\in \mathcal{K}, \forall i\in \mathcal{I} \right\}$ be the set of flows with destination $j$. For any subset $S_{i}\subseteq \mathcal{K}_{i}$ (or $S_{j}\subseteq \mathcal{K}_{j}$), let $d(S_{i})=\sum_{(k, j)\in S_{i}} d_{ijk}$ (or $d(S_{j})=\sum_{(k, i)\in S_{j}} d_{ijk}$) and $d^2(S_{i})=\sum_{(k, j)\in S_{i}} d_{ijk}^{2}$ (or $d^2(S_{j})=\sum_{(k, i)\in S_{j}} d_{ijk}^{2}$). We can formulate our problem as the following linear programming relaxation:
\begin{subequations}\label{coflow:main}
\begin{align}
& \text{min}  && \sum_{k \in \mathcal{K}} w_{k} C_{k}     &   & \tag{\ref{coflow:main}} \\
& \text{s.t.} && C_{k} \geq C_{ijk}, && \forall k\in \mathcal{K}, \forall i\in \mathcal{I}, \forall j\in \mathcal{J} \label{coflow:a} \\
&  && C_{ijk}\geq r_k+d_{ijk}, && \forall k\in \mathcal{K}, \forall i\in \mathcal{I}, \forall j\in \mathcal{J} \label{coflow:b} \\
&  && \sum_{(k, j)\in S_{i}}d_{ijk}C_{ijk}\geq \frac{1}{2m} \left(d(S_{i})^2+d^2(S_{i})\right),&& \forall i\in \mathcal{I}, \forall S_{i}\subseteq \mathcal{K}_{i} \label{coflow:c} \\
&  && \sum_{(k, i)\in S_{j}}d_{ijk}C_{ijk} \geq \frac{1}{2m} \left(d(S_{j})^2+d^2(S_{j})\right),&& \forall j\in \mathcal{J}, \forall S_{j}\subseteq \mathcal{K}_{j} \label{coflow:d} 
\end{align}
\end{subequations}

In the linear program (\ref{coflow:main}), $C_{k}$ is the completion time of coflow $k$ in the schedule and $C_{ijk}$ is the completion time of flow $(i, j, k)$. The constraint~(\ref{coflow:a}) is that the completion time of coflow $k$ is bounded by all its flows. The constraint~(\ref{coflow:b}) ensures that the completion time of any flow $(i, j, k)$ is at least its release time $r_k$ plus its load. The constraints~(\ref{coflow:c}) and (\ref{coflow:d}) are used to lower bound the completion time variable in the input port and the output port respectively. These two constraints are modified from the scheduling literature~\cite{Leslie1997}. For any flow $(i, j, k)$, we abstract the indices $i$, $j$ and $k$ and replace them with only one index. Based on the analysis in \cite{Leslie1997}, we have the following two lemmas: 

\begin{lem}\label{lem:lem1}
For the $i$-th input with $n$ flows, let $C_{1}, \ldots, C_{n}$ satisfy (\ref{coflow:c}) and assume without loss of generality that $C_{1}\leq \cdots\leq C_{n}$. Then, for each $f=1,\ldots, n$, if $S=\left\{1, \ldots, f\right\}$,
\begin{eqnarray*}
C_{f}\geq \frac{1}{2m} d(S).
\end{eqnarray*}
\end{lem}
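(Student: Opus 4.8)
The plan is to exploit the constraint (\ref{coflow:c}) applied to the specific set $S=\{1,\ldots,f\}$, together with the assumed ordering $C_1\le\cdots\le C_n$. Writing out (\ref{coflow:c}) for this $S$ gives
\[
\sum_{\ell=1}^{f} d_\ell C_\ell \;\ge\; \frac{1}{2m}\left(d(S)^2 + d^2(S)\right),
\]
where I abbreviate $d_\ell$ for the load of the $\ell$-th flow at input $i$. The left-hand side is a weighted average of the $C_\ell$'s with weights $d_\ell$, and since $C_f$ is the largest among $C_1,\ldots,C_f$, I can bound $\sum_{\ell=1}^{f} d_\ell C_\ell \le C_f \sum_{\ell=1}^{f} d_\ell = C_f\, d(S)$. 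Combining the two inequalities yields $C_f\, d(S) \ge \frac{1}{2m}(d(S)^2 + d^2(S))$.

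From here I divide through by $d(S)$ (which is positive, since all loads are positive integers and $S$ is nonempty) to get
\[
C_f \;\ge\; \frac{1}{2m}\left(d(S) + \frac{d^2(S)}{d(S)}\right) \;\ge\; \frac{1}{2m}\, d(S),
\]
where the last step simply drops the nonnegative term $\frac{d^2(S)}{d(S)}$. That gives exactly the claimed bound. The only mild subtlety is to confirm $d(S)>0$: this holds because each flow in $\mathcal{K}_i$ is defined with $d_{ijk}>0$, so every $d_\ell$ is a positive integer and the sum over the nonempty set $S$ is strictly positive; hence the division is legitimate.

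I do not anticipate a genuine obstacle here — this is essentially the standard Queyranne-type mean-busy-time argument adapted from~\cite{Leslie1997}. The one thing worth being careful about is the direction of the averaging step: it is crucial that we use $C_f = \max_{\ell\le f} C_\ell$ (which relies on the sorted order) to upper bound the left-hand side of (\ref{coflow:c}) by $C_f\, d(S)$; using any other index would not give a valid bound. Everything else is a one-line manipulation, and the extra $d^2(S)$ term in the constraint is simply discarded, so the proof is short. An entirely analogous statement holds for the output ports using (\ref{coflow:d}), which presumably is recorded as the companion lemma.
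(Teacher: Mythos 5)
Your proof is correct and follows essentially the same argument as the paper: apply constraint (\ref{coflow:c}) to $S=\{1,\ldots,f\}$, use the sorted order to bound $\sum_{\ell\le f} d_\ell C_\ell \le C_f\, d(S)$, discard the $d^2(S)$ term, and divide by $d(S)>0$. The only difference is cosmetic (you divide before dropping the extra term, and you explicitly note $d(S)>0$, which the paper leaves implicit).
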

\begin{proof}
For clarity of description, we use $C_{1}, \ldots, C_{n}$ to represent $C_{ijk}$ and use $d_{l}$ to represent $d_{ijk}$ in constraint (\ref{coflow:c}).
According to (\ref{coflow:c}) and the fact that $C_{l}\leq C_{f}$, we have
\begin{eqnarray*}
C_{f}d(S)\geq C_f\sum_{l=1}^{f} d_{l}\geq \sum_{l=1}^{f} d_{l}C_l \geq \frac{1}{2m} \left(d(S)^2+d^2(S)\right)\geq \frac{1}{2m} d(S)^2.
\end{eqnarray*}
The following inequality can be obtained:
\begin{eqnarray*}
C_{f}\geq \frac{1}{2m} d(S).
\end{eqnarray*}
\end{proof}

\begin{lem}\label{lem:lem2}
For the $j$-th output with $n$ flows, let $C_{1}, \ldots, C_{n}$ satisfy (\ref{coflow:d}) and assume without loss of generality that $C_{1}\leq \cdots\leq C_{n}$. Then, for each $f=1,\ldots, n$, if $S=\left\{1, \ldots, f\right\}$,
\begin{eqnarray*}
C_{f}\geq \frac{1}{2m} d(S).
\end{eqnarray*}
\end{lem}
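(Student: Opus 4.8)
The plan is to observe that Lemma~\ref{lem:lem2} is the exact dual of Lemma~\ref{lem:lem1}, with the roles of input ports and output ports swapped, and with constraint~(\ref{coflow:d}) playing the role that constraint~(\ref{coflow:c}) played. So the proof is essentially a transcription of the previous argument with the obvious substitutions. First I would set up notation in the same way: for the $j$-th output port, relabel its $n$ flows as $C_1,\ldots,C_n$ (standing for $C_{ijk}$) sorted so that $C_1 \leq \cdots \leq C_n$, and write $d_l$ for the corresponding $d_{ijk}$ appearing in constraint~(\ref{coflow:d}). Fix $f$ and put $S=\{1,\ldots,f\}$, so that $d(S)=\sum_{l=1}^f d_l$.

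Next I would chain the same three inequalities. Since $C_l \leq C_f$ for every $l \leq f$, we have $C_f \, d(S) \geq C_f \sum_{l=1}^f d_l \geq \sum_{l=1}^f d_l C_l$. Then I would apply constraint~(\ref{coflow:d}) to the set $S \subseteq \mathcal{K}_j$ to get $\sum_{l=1}^f d_l C_l \geq \frac{1}{2m}\left(d(S)^2 + d^2(S)\right) \geq \frac{1}{2m} d(S)^2$, where the last step just drops the nonnegative term $d^2(S)$. Combining, $C_f\, d(S) \geq \frac{1}{2m} d(S)^2$, and dividing through by $d(S) > 0$ yields $C_f \geq \frac{1}{2m} d(S)$, as claimed.

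There is no real obstacle here: the only thing to be slightly careful about is that $d(S)>0$ so the final division is legitimate — but this holds because $S$ is a set of flows each with $d_{ijk}>0$ by the definition of $\mathcal{K}_j$, so in fact $f \geq 1$ forces $d(S)>0$. Accordingly I expect the write-up to be three or four lines, mirroring the proof of Lemma~\ref{lem:lem1} verbatim except for replacing $\mathcal{K}_i$, $S_i$, and~(\ref{coflow:c}) by $\mathcal{K}_j$, $S_j$, and~(\ref{coflow:d}), and one could even dispatch it with the single sentence ``The proof is identical to that of Lemma~\ref{lem:lem1}, using constraint~(\ref{coflow:d}) in place of~(\ref{coflow:c}).''
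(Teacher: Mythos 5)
Your proposal is correct and matches the paper exactly: the paper's own proof of Lemma~\ref{lem:lem2} simply states that it is similar to Lemma~\ref{lem:lem1}, and your transcription of that argument with $\mathcal{K}_j$, $S_j$, and constraint~(\ref{coflow:d}) in place of $\mathcal{K}_i$, $S_i$, and~(\ref{coflow:c}) is precisely what is intended, with the same chain $C_f\,d(S)\geq \sum_l d_l C_l \geq \frac{1}{2m}d(S)^2$ followed by division by $d(S)>0$.
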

\begin{proof}
The proof is similar to that of lemma~\ref{lem:lem1}.
\end{proof}

Our algorithm flow-driven-list-scheduling (described in Algorithm~\ref{Alg1}) is as follows. Given $n$ flows from all coflows in the coflow set $\mathcal{K}$, we first compute an optimal solution $\bar{C}_1, \ldots, \bar{C}_n$ to the linear program (\ref{coflow:main}). Without loss of generality, we assume $\bar{C}_{1}\leq \cdots\leq \bar{C}_{n}$ and schedule the flows iteratively in the order of this list. For each flow $f$, the problem is to consider all flows that congested with $f$ and scheduled before $f$. Then, find a network core $h$ and assign flow $f$ to network core $h$ such that the complete time of $f$ is minimized. 
Lines 5-10 are to find the least loaded network core and assigning flow to it. Lines 11-24 are modified from Shafiee and Ghaderi's algorithm~\cite{shafiee2018improved}.
Therefore, all flows are transmitted in a preemptible manner.

\begin{algorithm}
\caption{flow-driven-list-scheduling}
    \begin{algorithmic}[1]
		    \REQUIRE a vector $\bar{C}\in \mathbb{R}_{\scriptscriptstyle \geq 0}^{n}$ used to decide the order of scheduling
				\STATE let $load_{I}(i,h)$ be the load on the $i$-th input port of the network core $h$
				\STATE let $load_{O}(j,h)$ be the load on the $j$-th output port of the network core $h$
				\STATE let $\mathcal{A}_h$ be the set of flows allocated to network core $h$
				\STATE both $load_{I}$ and $load_{O}$ are initialized to zero and $\mathcal{A}_h=\emptyset$ for all $h\in [1, m]$
				\FOR{every flow $f$ in non-decreasing order of $\bar{C}_f$, breaking ties arbitrarily}
				    \STATE note that the flow $f$ is sent by link $(i, j)$
				    \STATE $h^*=\arg \min_{h\in [1, m]}\left(load_{I}(i,h)+load_{O}(j,h)\right)$
						\STATE $\mathcal{A}_{h^*}=\mathcal{A}_{h^*}\cup \left\{f\right\}$
						\STATE $load_{I}(i,h^*)=load_{I}(i,h^*)+d_f$ and $load_{O}(j,h^*)=load_{O}(j,h^*)+d_f$
				\ENDFOR
				\FOR{each $h\in [1, m]$ do in parallel}
				    \STATE wait until the first coflow is released
						\WHILE{there is some incomplete flow}
                \FOR{every released and incomplete flow $f\in \mathcal{A}_{h}$ in non-decreasing order of $\bar{C}_f$, breaking ties arbitrarily}
								    \STATE note that the flow $f$ is sent by link $(i, j)$
										\IF{the link $(i, j)$ is idle}
										    \STATE schedule flow $f$
										\ENDIF
								\ENDFOR
								\WHILE{no new flow is completed or released}
								    \STATE transmit the flows that get scheduled in line 17 at maximum rate 1.
								\ENDWHILE
						\ENDWHILE
				\ENDFOR
   \end{algorithmic}
\label{Alg1}
\end{algorithm}

\subsection{Analysis}
This section shows that the proposed algorithm achieves an approximation ratio of $6-\frac{2}{m}$ with arbitrary release times, and an approximation ratio of $5-\frac{2}{m}$ without release time. Let $\mathcal{I}_{i}$ be the set of flows that belongs to the $i$-th input and let $\mathcal{J}_{j}$ be the set of flows that belongs to the $j$-th output. For any flow $f$ with $i$-th input and $j$-th output, let 
\begin{eqnarray*}
F(f)=\left\{l|C_l\leq C_{f}, \forall l\in \mathcal{J}_{j}\right\} 
\end{eqnarray*}
be the set of output flows congested with $f$ and scheduled before $f$.
Let 
\begin{eqnarray*}
G(f)=\left\{l|C_l\leq C_{f}, \forall l\in \mathcal{I}_{i}\right\}
\end{eqnarray*}
be the set of input flows congested with $f$ and scheduled before $f$. Note these two sets also include $f$.

\begin{lem}\label{lem:lem3}
Let $\bar{C}_{1}\leq \cdots\leq \bar{C}_{n}$ be an optimal solution to the linear program (\ref{coflow:main}), and let $\tilde{C}_{1}, \ldots, \tilde{C}_{n}$ denote the completion times in the schedule found by flow-driven-list-scheduling. For each $f=1, \ldots, n$,
\begin{eqnarray*}
\tilde{C}_{f}\leq \left(6-\frac{2}{m}\right)\bar{C}_{f}.
\end{eqnarray*}
\end{lem}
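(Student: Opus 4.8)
The plan is to bound $\tilde{C}_f$ by a release‑time term plus the total transmission work that can delay flow $f$ inside the single core $h^*$ to which Algorithm~\ref{Alg1} assigns it, and then to control that work by combining the ``assign to the least‑loaded core'' rule with the linear‑program lower bounds of Lemmas~\ref{lem:lem1} and~\ref{lem:lem2}.

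First I would establish a busy‑interval bound. Fix a flow $f$ on link $(i,j)$, assigned to the core $h^* = \arg\min_{h}\bigl(load_{I}(i,h)+load_{O}(j,h)\bigr)$ evaluated at the moment $f$ is processed in the first loop. Because transmissions are preemptible and, in the second loop, released‑and‑incomplete flows in $\mathcal{A}_{h^*}$ are always served in non‑decreasing order of $\bar{C}$, the only flows that can occupy input port $i$ or output port $j$ of core $h^*$ while $f$ is still incomplete are flows $l\in (G(f)\cup F(f))\cap\mathcal{A}_{h^*}$, i.e.\ flows sharing a port with $f$ and having $\bar{C}_l\le\bar{C}_f$. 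Let $t^{\star}$ be the largest release time among $f$ and these flows; since every such $l$ satisfies $r_l\le\bar{C}_l\le\bar{C}_f$ by constraint~(\ref{coflow:b}), we have $t^{\star}\le\bar{C}_f$. After $t^{\star}$ all relevant flows are available, and at every instant in $[t^{\star},\tilde{C}_f)$ at which $f$ is not transmitted, input $i$ or output $j$ of core $h^*$ is busy with some such flow $l\ne f$; since each such $l$ on input $i$ (resp.\ output $j$) contributes $d_l$ to that port and $f$ itself needs $d_f$, this yields
\begin{eqnarray*}
\tilde{C}_f \;\le\; t^{\star} \;+\; \Bigl(load_{I}(i,h^*)+load_{O}(j,h^*)\Bigr)\Big|_{\text{flows before }f}\;+\;d_f .
\end{eqnarray*}

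Next I would bound the middle term. Since $h^*$ minimizes $load_{I}(i,\cdot)+load_{O}(j,\cdot)$ at the time $f$ is placed, its value is at most the average of this quantity over the $m$ cores, and summing the input (resp.\ output) loads over all cores recovers at most the total size of the flows that precede $f$ on input $i$ (resp.\ output $j$), namely $d(G(f))-d_f$ (resp.\ $d(F(f))-d_f$):
\begin{eqnarray*}
\Bigl(load_{I}(i,h^*)+load_{O}(j,h^*)\Bigr)\Big|_{\text{flows before }f} \;\le\; \frac{1}{m}\Bigl(\bigl(d(G(f))-d_f\bigr)+\bigl(d(F(f))-d_f\bigr)\Bigr).
\end{eqnarray*}
Combining the two displays gives $\tilde{C}_f\le t^{\star}+\frac{1}{m}\bigl(d(G(f))+d(F(f))\bigr)+\bigl(1-\tfrac{2}{m}\bigr)d_f$. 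Now Lemma~\ref{lem:lem1} applied with $S=G(f)$ and Lemma~\ref{lem:lem2} applied with $S=F(f)$ give $d(G(f))\le 2m\,\bar{C}_f$ and $d(F(f))\le 2m\,\bar{C}_f$, while constraint~(\ref{coflow:b}) gives $d_f\le\bar{C}_f$ and, as noted, $t^{\star}\le\bar{C}_f$. Substituting these four bounds (using $1-\tfrac{2}{m}\ge0$ for $m\ge2$) yields $\tilde{C}_f\le\bar{C}_f+4\bar{C}_f+\bigl(1-\tfrac{2}{m}\bigr)\bar{C}_f=\bigl(6-\tfrac{2}{m}\bigr)\bar{C}_f$, which is the claim; dropping the $t^{\star}$ term (release times all zero) leaves $\bigl(5-\tfrac{2}{m}\bigr)\bar{C}_f$.

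The main obstacle is the first step: making the busy‑interval argument rigorous under preemption and arbitrary release dates. The delicate points are arguing that no flow with $\bar{C}_l>\bar{C}_f$ ever delays $f$, that it suffices to account only for work inside the single core $h^*$, that work performed before $t^{\star}$ can only help, and handling ties in the scheduling order so that the set of flows ``before $f$'' on input $i$ is contained in $G(f)\setminus\{f\}$ and the set on output $j$ in $F(f)\setminus\{f\}$. Once that inequality is secured, the averaging step and the substitution of the LP bounds are routine.
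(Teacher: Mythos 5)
Your proof is correct and reaches the paper's bound through the same overall skeleton (a busy-interval bound of the form $\bar{C}_f + \frac{1}{m}d(G(f)) + \frac{1}{m}d(F(f)) + (1-\frac{2}{m})d_f$, then Lemmas~\ref{lem:lem1} and~\ref{lem:lem2} and constraint~(\ref{coflow:b})), but the way you obtain the $\frac{1}{m}$ factors is genuinely different from the paper. The paper asserts that the links $(i,j)$ of \emph{all} $m$ network cores are busy from $r_{\max}(S)$ until $f$ starts, and charges $\frac{1}{m}\bigl(d(S_i\setminus\{f\})+d(S_j\setminus\{f\})\bigr)$ as if the conflicting work were processed at aggregate rate $m$; it never invokes the assignment rule of lines 5--10. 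You instead confine attention to the single core $h^*$ that actually carries $f$, bound the delay there by $load_I(i,h^*)+load_O(j,h^*)$ at the moment $f$ is placed, and recover the $\frac{1}{m}$ factor from the ``least-loaded core'' choice via minimum-at-most-average. Your route is the more faithful one: since $f$ is transmitted only on $h^*$, nothing in the algorithm forces the other cores' $(i,j)$ links to be busy while $f$ waits, so the paper's stated justification of its inequality~(\ref{eq:2}) is at best elliptical, and your averaging argument is exactly the missing ingredient that makes that inequality rigorous. The price is that you must be careful about preemption, release times, and tie-breaking consistency between the two loops (so that every flow that can block $f$ on core $h^*$ is already counted in the loads when $f$ is placed), which you correctly flag; you also make explicit the hypothesis $1-\frac{2}{m}\ge 0$ (i.e.\ $m\ge 2$) needed in the final substitution $d_f\le\bar{C}_f$, an assumption the paper uses silently in step~(\ref{eq:7}).
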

\begin{proof}
The proof is similar to Hall \emph{et al}.~\cite{Leslie1997}. Assume the flow $f$ is sent via link $(i, j)$. Let $S_{j} = F(f)$, $S_{i}=G(f)$, $S=S_{i}\cup S_{j}$ and $r_{max}(S)=\max_{k\in S} r_k$. Consider the schedule induced by the flows $S$. Since all links $(i, j)$ in the network cores are busy from $r_{max}(S)$ to the start of flow $f$, we have
\begin{eqnarray}
\tilde{C}_{f} & \leq & r_{max}(S) + \frac{1}{m}d(S_{i}\setminus \left\{f\right\})+ \frac{1}{m}d(S_{j}\setminus \left\{f\right\})+d_f \label{eq:2}\\
              & \leq & \bar{C}_{f} + \frac{1}{m}d(S_{i}\setminus \left\{f\right\})+ \frac{1}{m}d(S_{j}\setminus \left\{f\right\})+d_f \label{eq:3}\\
							& =    & \bar{C}_{f} + \frac{1}{m}d(S_{i})+ \frac{1}{m}d(S_{j})+\left(1-\frac{2}{m}\right)d_f \label{eq:4}\\
							& \overset{\mbox{Lem.~\ref{lem:lem1}}}{\leq} & 3\bar{C}_{f} + \frac{1}{m}d(S_{j})+\left(1-\frac{2}{m}\right)d_f \label{eq:5}\\
							& \overset{\mbox{Lem.~\ref{lem:lem2}}}{\leq} & 5\bar{C}_{f}+\left(1-\frac{2}{m}\right)d_f \label{eq:6}\\
							& \leq & \left(6-\frac{2}{m}\right)\bar{C}_{f} \label{eq:7}
\end{eqnarray}
The inequality (\ref{eq:3}) is due to $\bar{C}_{f}\geq \bar{C}_{k}$ for all $k\in S$, we have $\bar{C}_{f}\geq r_{max}(S)$. The equation (\ref{eq:4}) shifts the partial flow $f$ into the second and third terms. The inequalities (\ref{eq:5}) and (\ref{eq:6}) are based on lemma~\ref{lem:lem1} and lemma~\ref{lem:lem2} respectively. The inequality (\ref{eq:7}) is due to $\bar{C}_{f}\geq d_f$ in the linear program (\ref{coflow:main}).
\end{proof}

According to lemma~\ref{lem:lem3}, we have the following theorem:
\begin{thm}\label{thm:thm1}
The flow-driven-list-scheduling has an approximation ratio of, at most, $6-\frac{2}{m}$.
\end{thm}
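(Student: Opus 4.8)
The plan is to derive Theorem~\ref{thm:thm1} directly from Lemma~\ref{lem:lem3} by a standard summation argument, combined with the fact that the linear program~(\ref{coflow:main}) is a genuine relaxation of the coflow scheduling problem. First I would observe that for every coflow $k$, its completion time $\tilde{C}_k$ in the schedule produced by flow-driven-list-scheduling equals $\max_{(i,j):d_{ijk}>0} \tilde{C}_{ijk}$, i.e.\ the completion time of the last flow of coflow $k$; choose a flow $f(k)$ attaining this maximum. By Lemma~\ref{lem:lem3} applied to $f(k)$, we get $\tilde{C}_k = \tilde{C}_{f(k)} \leq \left(6-\frac{2}{m}\right)\bar{C}_{f(k)}$, and since constraint~(\ref{coflow:a}) forces $\bar{C}_k \geq \bar{C}_{ijk}$ for every flow of coflow $k$, in particular $\bar{C}_k \geq \bar{C}_{f(k)}$. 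Hence $\tilde{C}_k \leq \left(6-\frac{2}{m}\right)\bar{C}_k$ for each $k$.

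Next I would multiply by the weights $w_k \geq 0$ and sum over all coflows to obtain
\[
\sum_{k\in\mathcal{K}} w_k \tilde{C}_k \;\leq\; \left(6-\frac{2}{m}\right)\sum_{k\in\mathcal{K}} w_k \bar{C}_k.
\]
The final step is to argue that $\sum_{k} w_k \bar{C}_k$ is a lower bound on the optimal value $\mathrm{OPT}$ of the coflow scheduling problem. This requires checking that any feasible coflow schedule induces a feasible point of the LP with the same objective value: constraints~(\ref{coflow:a}) and~(\ref{coflow:b}) are immediate from the definitions of completion time and release time, and constraints~(\ref{coflow:c}) and~(\ref{coflow:d}) are the port-capacity inequalities — with $m$ parallel cores each of unit speed, the $i$-th input port has total processing capacity $m$ per unit time, so the standard concurrent-open-shop inequality $\sum_{l\in S} d_l C_l \geq \frac{1}{2}\bigl(d(S)^2 + d^2(S)\bigr)$ becomes the stated inequality after dividing the right-hand side by $m$ (this is exactly the adaptation from~\cite{Leslie1997} referenced before Lemma~\ref{lem:lem1}). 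Therefore $\sum_k w_k\bar{C}_k \leq \mathrm{OPT}$, which combined with the displayed inequality yields $\sum_k w_k\tilde{C}_k \leq \left(6-\frac{2}{m}\right)\mathrm{OPT}$, establishing the claimed approximation ratio.

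I expect the only subtle point to be the verification that the LP is a valid relaxation — specifically justifying the $\frac{1}{2m}$ factor in~(\ref{coflow:c}) and~(\ref{coflow:d}) against an arbitrary feasible schedule on $m$ cores, since flows of a divisible coflow may be spread across different cores. The summation and the reduction from coflow completion times to representative-flow completion times are routine; the approximation-ratio bookkeeping is entirely carried by Lemma~\ref{lem:lem3}.
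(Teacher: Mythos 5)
Your proposal is correct and follows essentially the same route as the paper: the paper states Theorem~\ref{thm:thm1} as an immediate consequence of Lemma~\ref{lem:lem3}, with exactly the bookkeeping you spell out (per-flow bound, constraint~(\ref{coflow:a}) to pass to coflow completion times, weighted summation, and the fact that the LP~(\ref{coflow:main}) is a valid relaxation with the $\frac{1}{2m}$ factor coming from the aggregate port capacity $m$ across cores). You simply make explicit the steps the paper leaves implicit.
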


When all coflows are release at time zero, we have the following lemma:
\begin{lem}\label{lem:lem4}
Let $\bar{C}_{1}\leq \cdots\leq \bar{C}_{n}$ be an optimal solution to the linear program (\ref{coflow:main}), and let $\tilde{C}_{1}, \ldots, \tilde{C}_{n}$ denote the completion times in the schedule found by flow-driven-list-scheduling. For each $f=1, \ldots, n$,
\begin{eqnarray*}
\tilde{C}_{f}\leq \left(5-\frac{2}{m}\right)\bar{C}_{f}.
\end{eqnarray*}
when all coflows are released at time zero.
\end{lem}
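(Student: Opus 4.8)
The plan is to mimic the proof of Lemma~\ref{lem:lem3} almost verbatim, simply dropping the release-time term. Since every coflow (hence every flow) is released at time zero, we have $r_{max}(S)=0$ for the set $S=S_i\cup S_j$ induced by the flows congested with and scheduled before $f$. Starting from the same structural observation---that from time $0$ until flow $f$ begins, the link $(i,j)$ in the network core to which $f$ is assigned is continuously busy processing flows of $S_i\cup S_j$---I would write
\begin{eqnarray*}
\tilde{C}_{f} & \leq & \frac{1}{m}d(S_{i}\setminus \left\{f\right\})+ \frac{1}{m}d(S_{j}\setminus \left\{f\right\})+d_f \\
              & =    & \frac{1}{m}d(S_{i})+ \frac{1}{m}d(S_{j})+\left(1-\frac{2}{m}\right)d_f.
\end{eqnarray*}

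Then I would apply Lemma~\ref{lem:lem1} to bound $\frac{1}{m}d(S_i)=\frac{1}{m}d(G(f))\leq 2\bar{C}_f$ and Lemma~\ref{lem:lem2} to bound $\frac{1}{m}d(S_j)=\frac{1}{m}d(F(f))\leq 2\bar{C}_f$, exactly as in inequalities (\ref{eq:5}) and (\ref{eq:6}), yielding $\tilde{C}_f\leq 4\bar{C}_f+\left(1-\frac{2}{m}\right)d_f$. Finally, using the linear program constraint that $\bar{C}_f\geq d_{ijk}=d_f$ (which follows from (\ref{coflow:b}) with $r_k=0$, or more simply (\ref{coflow:b}) in general), the last term is at most $\left(1-\frac{2}{m}\right)\bar{C}_f$, giving $\tilde{C}_f\leq\left(5-\frac{2}{m}\right)\bar{C}_f$.

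The argument is essentially routine given Lemma~\ref{lem:lem3}; the only point that needs a moment of care is the very first inequality, i.e., justifying that the idle time before $f$ starts is zero when release times vanish. This is where the hypothesis $r_k=0$ is actually used: in Algorithm~\ref{Alg1}, core $h$ waits only until the first coflow is released (line~12), which is time $0$, and thereafter the link $(i,j)$ on core $h$ is never idle while some flow in $\mathcal{A}_h$ using that link is incomplete---in particular it is busy throughout $[0,\tilde{C}_f)$ with flows of $S_i\cup S_j$, whose total load on that link is at most $\frac1m d(S_i\setminus\{f\})+\frac1m d(S_j\setminus\{f\})+d_f$ by the load-balancing choice of $h^*$ in lines~7--9. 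I would state this briefly and then refer back to the chain of inequalities (\ref{eq:4})--(\ref{eq:7}) from Lemma~\ref{lem:lem3}, noting that we save exactly the $\bar{C}_f$ coming from the $r_{max}(S)$ term. As an immediate consequence, summing $w_k\tilde{C}_k\leq w_k\max_{f\in k}\tilde{C}_f\leq\left(5-\frac{2}{m}\right)w_k\bar{C}_k\leq\left(5-\frac{2}{m}\right)w_k\bar{C}_k$ over all coflows and comparing with the LP optimum gives the $\left(5-\frac{2}{m}\right)$-approximation stated in the abstract.
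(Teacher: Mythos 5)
Your proposal is correct and follows essentially the same route as the paper's own proof of Lemma~\ref{lem:lem4}: drop the $r_{max}(S)$ term since all release times are zero, then apply Lemmas~\ref{lem:lem1} and \ref{lem:lem2} and the bound $\bar{C}_f\geq d_f$ exactly as in the chain of inequalities of Lemma~\ref{lem:lem3}. Your extra justification of why the link is busy throughout $[0,\tilde{C}_f)$ is a welcome elaboration of a step the paper states only briefly.
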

\begin{proof}
Assume the flow $f$ is sent via link $(i, j)$. Let Let $S_{j} = F(f)$, $S_{i}=G(f)$, $S=S_{i}\cup S_{j}$. Consider the schedule induced by the flows $S$. Since all links $(i, j)$ in the network cores are busy from zero to the start of flow $f$, we have
\begin{eqnarray*}
\tilde{C}_{f} & \leq & \frac{1}{m}d(S_{i}\setminus \left\{f\right\})+ \frac{1}{m}d(S_{j}\setminus \left\{f\right\})+d_f \\
							& =    & \frac{1}{m}d(S_{i})+ \frac{1}{m}d(S_{j})+\left(1-\frac{2}{m}\right)d_f \\
							& \leq & 4\bar{C}_{f}+\left(1-\frac{2}{m}\right)d_f \\
							& \leq & \left(5-\frac{2}{m}\right)\bar{C}_{f}
\end{eqnarray*}
\end{proof}

According to lemma~\ref{lem:lem4}, we have the following theorem:
\begin{thm}\label{thm:thm2}
For the special case when all coflows are released at time zero, the flow-driven-list-scheduling has an approximation ratio of, at most, $5-\frac{2}{m}$.
\end{thm}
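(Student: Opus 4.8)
The plan is to deduce Theorem~\ref{thm:thm2} from Lemma~\ref{lem:lem4} by the same lifting-and-summation argument that turns Lemma~\ref{lem:lem3} into Theorem~\ref{thm:thm1}; the genuinely substantive work is already packed into Lemma~\ref{lem:lem4} (the per-flow delay bound) and into the fact that~(\ref{coflow:main}) is a valid relaxation, so the theorem itself should come out in a few lines. First I would pass from the per-flow guarantee to a per-coflow guarantee. Fix a coflow $k$; in our schedule $\tilde{C}_{k}=\max\{\tilde{C}_{f}: f \text{ is a flow of } k\}$, while constraint~(\ref{coflow:a}) gives $\bar{C}_{k}\geq \bar{C}_{f}$ for every flow $f$ of $k$. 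Letting $f^{\star}$ be the flow of $k$ that finishes last in our schedule,
\begin{eqnarray*}
\tilde{C}_{k} \;=\; \tilde{C}_{f^{\star}} \;\overset{\mbox{Lem.~\ref{lem:lem4}}}{\leq}\; \left(5-\frac{2}{m}\right)\bar{C}_{f^{\star}} \;\leq\; \left(5-\frac{2}{m}\right)\bar{C}_{k}.
\end{eqnarray*}

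Second, I would multiply by $w_{k}$ and sum over $k\in\mathcal{K}$ to get $\sum_{k\in\mathcal{K}} w_{k}\tilde{C}_{k}\leq (5-\tfrac{2}{m})\sum_{k\in\mathcal{K}} w_{k}\bar{C}_{k}$. It then remains to bound $\sum_{k} w_{k}\bar{C}_{k}$ by the optimal total weighted completion time $\mathrm{OPT}$. Since $\bar{C}$ is an optimal solution of the linear program~(\ref{coflow:main}), this reduces to checking that~(\ref{coflow:main}) is a true relaxation, i.e.\ that the flow completion times of an optimal schedule form a feasible LP point whose objective value equals $\mathrm{OPT}$ (take $C_{k}$ to be that schedule's actual completion time of coflow $k$). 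Constraints~(\ref{coflow:a}) and~(\ref{coflow:b}) are immediate, and with every $r_{k}=0$ the latter is just $C_{ijk}\geq d_{ijk}$.

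Third — the only step with real content — I would verify~(\ref{coflow:c}) and~(\ref{coflow:d}) for an arbitrary schedule. Fix an input $i$ and $S_{i}\subseteq\mathcal{K}_{i}$; by the flow-level assumption each flow of $S_{i}$ is routed entirely through one core, so let $S_{i}^{(h)}$ be the flows of $S_{i}$ on core $h$. At input port $i$ of core $h$ these flows run sequentially at unit rate, hence the single-machine inequality gives $\sum_{l\in S_{i}^{(h)}} d_{l}C_{l}\geq \tfrac12\bigl(d(S_{i}^{(h)})^{2}+d^{2}(S_{i}^{(h)})\bigr)$. Summing over $h$, using $\sum_{h} d(S_{i}^{(h)})^{2}\geq \tfrac1m\bigl(\sum_{h} d(S_{i}^{(h)})\bigr)^{2}=\tfrac1m d(S_{i})^{2}$ (power mean / convexity) and $\sum_{h} d^{2}(S_{i}^{(h)})=d^{2}(S_{i})\geq \tfrac1m d^{2}(S_{i})$, yields~(\ref{coflow:c}); ~(\ref{coflow:d}) is symmetric in outputs. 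Therefore $\sum_{k} w_{k}\bar{C}_{k}\leq\mathrm{OPT}$, and chaining the inequalities gives $\sum_{k} w_{k}\tilde{C}_{k}\leq(5-\tfrac{2}{m})\mathrm{OPT}$, which is Theorem~\ref{thm:thm2}.

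The main obstacle is really this relaxation check: getting the $\tfrac{1}{2m}$ factor rather than a weaker $\tfrac1m\cdot\tfrac12$ bound, and making the argument rest squarely on the "each flow lives on one core" property (a truly splittable flow would break the per-core single-machine inequality). Everything else — the max-to-coflow lifting and the weighted sum — is identical bookkeeping to the release-time case already handled in Theorem~\ref{thm:thm1}.
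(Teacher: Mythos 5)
Your proposal is correct and follows essentially the same route as the paper: Theorem~\ref{thm:thm2} is deduced directly from the per-flow bound of Lemma~\ref{lem:lem4}, lifted to coflows via constraint~(\ref{coflow:a}) and combined with the fact that (\ref{coflow:main}) is a valid relaxation, so the weighted sum is at most $\left(5-\frac{2}{m}\right)$ times the optimum. The only difference is that you explicitly verify the relaxation property (the $\frac{1}{2m}$ factor in (\ref{coflow:c})--(\ref{coflow:d}) via the per-core single-machine inequality and convexity), a step the paper asserts without proof; your verification is accurate.
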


\section{Approximation Algorithm for Indivisible Coflow Scheduling}\label{sec:Algorithm2}
In this section, coflows are considered to be indivisible, where flows in a coflow can only be transmitted through the same core. For every coflow $k$ and input port $i$, let $L_{ik}=\sum_{j=1}^{N}d_{ijk}$ be the total amount of data that coflow $k$ needs to transmit through input port $i$. Moreover, let $L_{jk}=\sum_{i=1}^{N}d_{ijk}$ be the total amount of data that coflow $k$ needs to transmit through output port $j$. When the coflow is indivisible, we can formulate our problem as the following linear programming relaxation:
\begin{subequations}\label{incoflow:main}
\begin{align}
& \text{min}  && \sum_{k \in \mathcal{K}} w_{k} C_{k}     &   & \tag{\ref{incoflow:main}} \\
& \text{s.t.} && C_{k}\geq r_k+L_{ik}, && \forall k\in \mathcal{K}, \forall i\in \mathcal{I} \label{incoflow:a} \\
&             && C_{k}\geq r_k+L_{jk}, && \forall k\in \mathcal{K}, \forall j\in \mathcal{J} \label{incoflow:b} \\
&  && \sum_{k\in S}L_{ik}C_{k} \geq \frac{1}{2m} \left(\sum_{k\in S} L_{ik}^2+\left(\sum_{k\in S} L_{ik}\right)^2\right),&& \forall i\in \mathcal{I}, \forall S\subseteq \mathcal{K} \label{incoflow:c} \\
&  && \sum_{k\in S}L_{jk}C_{k} \geq \frac{1}{2m} \left(\sum_{k\in S} L_{jk}^2+\left(\sum_{k\in S} L_{jk}\right)^2\right),&& \forall j\in \mathcal{J}, \forall S\subseteq \mathcal{K} \label{incoflow:d} 
\end{align}
\end{subequations}

In the linear program (\ref{incoflow:main}), $C_{k}$ is the completion time of coflow $k$ in the schedule. The constraints~(\ref{incoflow:a}) and (\ref{incoflow:b}) ensure that the completion time of any coflow $k$ is at least its release time $r_k$ plus its load. The constraints~(\ref{incoflow:c}) and (\ref{incoflow:d}) are used to lower bound the completion time variable in the input port and the output port, respectively. These two constraints are modified from the scheduling literature~\cite{Leslie1997, ahmadi2020scheduling}. Based on the analysis in \cite{Leslie1997}, we have the following two lemmas: 

\begin{lem}\label{lem:lem11}
For the $i$-th input with $n$ coflows, let $C_{1}, \ldots, C_{n}$ satisfy (\ref{incoflow:c}) and assume without loss of generality that $C_{1}\leq \cdots\leq C_{n}$. Then, for each $f=1,\ldots, n$, if $S=\left\{1, \ldots, f\right\}$,
\begin{eqnarray*}
C_{f}\geq \frac{1}{2m} \sum_{k\in S} L_{ik}.
\end{eqnarray*}
\end{lem}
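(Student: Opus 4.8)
The plan is to reuse verbatim the argument that proves Lemma~\ref{lem:lem1}, since constraint~(\ref{incoflow:c}) is structurally identical to (\ref{coflow:c}) once the per-flow size $d_{ijk}$ is replaced by the aggregated input-port load $L_{ik}$. Fix $f\in\{1,\ldots,n\}$ and let $S=\{1,\ldots,f\}$ as in the statement.

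First I would apply constraint~(\ref{incoflow:c}) to this particular set $S$, which gives $\sum_{k\in S}L_{ik}C_k \ge \frac{1}{2m}\big(\sum_{k\in S}L_{ik}^2 + (\sum_{k\in S}L_{ik})^2\big)$. Then I would invoke the ordering hypothesis $C_1\le\cdots\le C_n$: every $k\in S$ satisfies $C_k\le C_f$, so $C_f\sum_{k\in S}L_{ik}\ge\sum_{k\in S}L_{ik}C_k$. Chaining the two estimates and discarding the nonnegative term $\frac{1}{2m}\sum_{k\in S}L_{ik}^2$ on the right-hand side yields $C_f\sum_{k\in S}L_{ik}\ge\frac{1}{2m}(\sum_{k\in S}L_{ik})^2$. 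Dividing through by $\sum_{k\in S}L_{ik}$ delivers the claimed bound $C_f\ge\frac{1}{2m}\sum_{k\in S}L_{ik}$.

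There is essentially no obstacle here; the only point deserving a remark is the final division, which requires $\sum_{k\in S}L_{ik}>0$. In the degenerate case $\sum_{k\in S}L_{ik}=0$ the inequality is immediate, since its right-hand side is zero and $C_f\ge 0$. So I would present the proof as a single chain of inequalities mirroring Lemma~\ref{lem:lem1}, and note that the companion statement for output ports (the analogue of Lemma~\ref{lem:lem2}, using (\ref{incoflow:d}) in place of (\ref{incoflow:c})) follows by the same argument with $L_{ik}$ replaced by $L_{jk}$.
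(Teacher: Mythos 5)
Your proof is correct and matches the paper's argument essentially verbatim: apply constraint~(\ref{incoflow:c}) to $S=\{1,\ldots,f\}$, bound $\sum_{k\in S}L_{ik}C_k$ above by $C_f\sum_{k\in S}L_{ik}$ using the ordering, drop the term $\frac{1}{2m}\sum_{k\in S}L_{ik}^2$, and divide. Your extra remark about the degenerate case $\sum_{k\in S}L_{ik}=0$ is a small but harmless refinement the paper omits.
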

\begin{proof}
According to (\ref{incoflow:c}) and the fact that $C_{l}\leq C_{f}$, we have:
\begin{eqnarray*}
C_{f}\sum_{k\in S}L_{ik} \geq \sum_{k\in S}L_{ik}C_k \geq \frac{1}{2m} \left(\sum_{k\in S} L_{ik}^2+\left(\sum_{k\in S} L_{ik}\right)^2\right) \geq \frac{1}{2m} \left(\sum_{k\in S} L_{ik}\right)^2.
\end{eqnarray*}
The following inequality can be obtained:
\begin{eqnarray*}
C_{f}\geq \frac{1}{2m} \sum_{k\in S} L_{ik}.
\end{eqnarray*}
\end{proof}

\begin{lem}\label{lem:lem22}
For the $j$-th output with $n$ coflows, let $C_{1}, \ldots, C_{n}$ satisfy (\ref{incoflow:d}) and assume without loss of generality that $C_{1}\leq \cdots\leq C_{n}$. Then, for each $f=1,\ldots, n$, if $S=\left\{1, \ldots, f\right\}$,
\begin{eqnarray*}
C_{f}\geq \frac{1}{2m} \sum_{k\in S} L_{jk}.
\end{eqnarray*}
\end{lem}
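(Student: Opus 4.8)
\textbf{Proof proposal for Lemma~\ref{lem:lem22}.}
The plan is to mirror the argument used for Lemma~\ref{lem:lem11}, simply replacing the input-port quantities $L_{ik}$ by the output-port quantities $L_{jk}$ and invoking constraint~(\ref{incoflow:d}) in place of~(\ref{incoflow:c}). First I would fix $f$, set $S=\{1,\ldots,f\}$, and observe that by the assumed ordering $C_{1}\le\cdots\le C_{n}$ we have $C_{l}\le C_{f}$ for every $l\in S$. Hence $C_{f}\sum_{k\in S}L_{jk}\ge\sum_{k\in S}L_{jk}C_{k}$.

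Next I would apply constraint~(\ref{incoflow:d}) to this particular subset $S$, which gives $\sum_{k\in S}L_{jk}C_{k}\ge\frac{1}{2m}\bigl(\sum_{k\in S}L_{jk}^{2}+(\sum_{k\in S}L_{jk})^{2}\bigr)$. Since $\sum_{k\in S}L_{jk}^{2}\ge 0$, the right-hand side is at least $\frac{1}{2m}(\sum_{k\in S}L_{jk})^{2}$. Chaining these inequalities yields $C_{f}\sum_{k\in S}L_{jk}\ge\frac{1}{2m}(\sum_{k\in S}L_{jk})^{2}$, and dividing through by $\sum_{k\in S}L_{jk}$ (which is positive whenever the lemma is non-vacuous) produces $C_{f}\ge\frac{1}{2m}\sum_{k\in S}L_{jk}$, as claimed.

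There is essentially no obstacle here: the statement is the exact symmetric counterpart of Lemma~\ref{lem:lem11}, and the only things that change are the subscript $i\mapsto j$ and the LP constraint label $(\ref{incoflow:c})\mapsto(\ref{incoflow:d})$. The one minor point worth a sentence is the degenerate case $\sum_{k\in S}L_{jk}=0$, in which the inequality holds trivially since the right-hand side is zero; otherwise the division step is justified. Accordingly, I would simply write ``The proof is analogous to that of Lemma~\ref{lem:lem11}, using constraint~(\ref{incoflow:d}) instead of~(\ref{incoflow:c})'' or reproduce the two-line computation above for completeness.
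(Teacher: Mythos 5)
Your proposal is correct and matches the paper's approach exactly: the paper proves Lemma~\ref{lem:lem22} by noting it is the symmetric counterpart of Lemma~\ref{lem:lem11}, whose proof is precisely the chain $C_{f}\sum_{k\in S}L_{jk}\geq\sum_{k\in S}L_{jk}C_{k}\geq\frac{1}{2m}\left(\sum_{k\in S}L_{jk}^{2}+\left(\sum_{k\in S}L_{jk}\right)^{2}\right)\geq\frac{1}{2m}\left(\sum_{k\in S}L_{jk}\right)^{2}$ followed by division. Your extra remark about the degenerate case $\sum_{k\in S}L_{jk}=0$ is a harmless refinement the paper omits.
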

\begin{proof}
The proof is similar to that of lemma~\ref{lem:lem11}.
\end{proof}

Our algorithm coflow-driven-list-scheduling (described in Algorithm~\ref{Alg2}) is as follows. Given a set $\mathcal{K}$ of $n$ coflows, we first compute an optimal solution $\bar{C}_1, \ldots, \bar{C}_n$ to the linear program (\ref{incoflow:main}). Without loss of generality, we assume $\bar{C}_{1}\leq \cdots\leq \bar{C}_{n}$ and schedule all the flows in all coflows iteratively respecting the ordering in this list. For each coflow $k$, we find a network core $h$ that can transmit coflow $k$ such that the complete time of coflow $k$ is minimized. Lines 5-10 are to find a network core $h$ that minimizes the maximum completion time of the coflow $k$.
Lines 10-25 transmit all the flows allocated to the network core $h$ in the order of the completion time of the coflow $\bar{C}$ to which they belong.


\begin{algorithm}
\caption{coflow-driven-list-scheduling}
    \begin{algorithmic}[1]
		    \REQUIRE a vector $\bar{C}\in \mathbb{R}_{\scriptscriptstyle \geq 0}^{n}$ used to decide the order of scheduling
				\STATE let $load_{I}(i,h)$ be the load on the $i$-th input port of the network core $h$
				\STATE let $load_{O}(j,h)$ be the load on the $j$-th output port of the network core $h$
				\STATE let $\mathcal{A}_h$ be the set of coflows allocated to network core $h$
				\STATE both $load_{I}$ and $load_{O}$ are initialized to zero and $\mathcal{A}_h=\emptyset$ for all $h\in [1, m]$
				\FOR{every coflow $k$ in non-decreasing order of $\bar{C}_k$, breaking ties arbitrarily}
				    \STATE $h^*=\arg \min_{h\in [1, m]}\left(\max_{i,j\in [1,N]}\left(load_{I}(i,h)+load_{O}(j,h)+L_{ik}+L_{jk}\right)\right)$
						\STATE $\mathcal{A}_{h^*}=\mathcal{A}_{h^*}\cup \left\{k\right\}$
						\STATE $load_{I}(i,h^*)=load_{I}(i,h^*)+L_{ik}$ and $load_{O}(j,h^*)=load_{O}(j,h^*)+L_{jk}$ for all $i,j\in [1,N]$
				\ENDFOR
				\FOR{each $h\in [1, m]$ do in parallel}
				    \STATE wait until the first coflow is released
						\WHILE{there is some incomplete flow}
						    \STATE for all $k\in \mathcal{A}_{h}$, list the released and incomplete flows respecting the non-decreasing order in $\bar{C}_k$
								\STATE let $L$ be the set of flows in the list
                \FOR{every flow $f\in L$}
								    \STATE note that the flow $f$ is sent by link $(i, j)$
										\IF{the link $(i, j)$ is idle}
										    \STATE schedule flow $f$
										\ENDIF
								\ENDFOR
								\WHILE{no new flow is completed or released}
								    \STATE transmit the flows that get scheduled in line 18 at maximum rate 1.
								\ENDWHILE
						\ENDWHILE
				\ENDFOR
   \end{algorithmic}
\label{Alg2}
\end{algorithm}

\subsection{Analysis}
This section shows that the proposed algorithm achieves an approximation ratio of $4m+1$ with arbitrary release times, and an approximation ratio of $4m$ without release time. 
\begin{lem}\label{lem:lem33}
Let $\bar{C}_{1}\leq \cdots\leq \bar{C}_{n}$ be an optimal solution to the linear program (\ref{coflow:main}), and let $\tilde{C}_{1}, \ldots, \tilde{C}_{n}$ denote the completion times in the schedule found by coflow-driven-list-scheduling. For each $f=1, \ldots, n$,
\begin{eqnarray*}
\tilde{C}_{f}\leq \left(4m+1\right)\bar{C}_{f}.
\end{eqnarray*}
\end{lem}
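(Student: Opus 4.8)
The plan is to mimic the analysis of Lemma~\ref{lem:lem3}, but to track the structure imposed by the indivisibility constraint. Fix a coflow $f$ (I will abuse notation and call the relevant coflow $f$, matching the statement), and suppose it has been allocated by Algorithm~\ref{Alg2} to network core $h^*$. Let $\mathcal{S}=\{k : \bar{C}_k \le \bar{C}_f\}$ be the set of coflows scheduled no later than $f$, and let $r_{max}(\mathcal{S})=\max_{k\in\mathcal{S}} r_k \le \bar{C}_f$. The key structural fact I would establish first is a bound on $load_I(i,h^*)+load_O(j,h^*)+L_{if}+L_{jf}$ at the moment $f$ is placed, for the particular ports $i,j$ that realize the max in line 6. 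Because line 6 picks the core minimizing this max, and because the sum over all $m$ cores of $load_I(i,h)$ is exactly $\sum_{k\in\mathcal{S}\setminus\{f\}, k \text{ before}} L_{ik}$ (each earlier coflow contributes its full $L_{ik}$ to exactly one core), an averaging argument gives that the chosen core satisfies, for every pair $(i,j)$,
\[
load_I(i,h^*)+load_O(j,h^*) \le \frac{1}{m}\sum_{k\in \mathcal{S}, k\neq f} L_{ik} + \frac{1}{m}\sum_{k\in \mathcal{S}, k\neq f} L_{jk} \le \frac{1}{m} d(S_i) + \frac{1}{m} d(S_j),
\]
where $S_i$ (resp.\ $S_j$) denotes the coflows in $\mathcal{S}$ touching input $i$ (resp.\ output $j$), and $d(S_i)=\sum_{k\in S_i} L_{ik}$.

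Next I would bound the completion time $\tilde{C}_f$ inside core $h^*$. Since core $h^*$ runs its own list-schedule (lines 11--24) on the coflows assigned to it, and since list-scheduling of coflows on a single $N\times N$ switch completes a coflow by $r_{max} + (\text{load on its busiest input}) + (\text{load on its busiest output})$ up to the usual factor-of-overcounting — here we must be careful: the single-core coflow list-scheduling guarantee (as in Shafiee--Ghaderi / the $2\times$ loss) gives
\[
\tilde{C}_f \le r_{max}(\mathcal{S}) + \max_{i} \bigl(load_I(i,h^*)+L_{if}\bigr) + \max_{j}\bigl(load_O(j,h^*)+L_{jf}\bigr),
\]
and combining with the averaging bound above plus constraints~(\ref{incoflow:a})--(\ref{incoflow:b}) to absorb the $L_{if}, L_{jf}$ terms into $\bar{C}_f$, I get $\tilde{C}_f \le \bar{C}_f + \frac{1}{m} d(S_i) + \frac{1}{m} d(S_j) + 2\bar{C}_f$ for the maximizing $i,j$. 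Then Lemma~\ref{lem:lem11} applied to $S_i$ gives $d(S_i) \le 2m\,\bar{C}_f$ and Lemma~\ref{lem:lem22} applied to $S_j$ gives $d(S_j)\le 2m\,\bar{C}_f$, so $\frac{1}{m}d(S_i)+\frac{1}{m}d(S_j) \le 4\bar{C}_f$, and together with the $r_{max}(\mathcal{S})\le\bar{C}_f$ term and the two load-feasibility terms we reach $\tilde{C}_f \le (4m+1)\bar{C}_f$ — wait, the $4m$ must come out of the max-load term directly; the cleaner bookkeeping is to bound $\max_i(load_I(i,h^*)+L_{if}) \le \frac{1}{m} d(S_i) + L_{if}$ where now $d(S_i) = \sum_{k \in S_i} L_{ik} \le 2m \bar C_f$ so this term is $\le 2m\bar C_f + \bar C_f$, similarly for the output, and $r_{max}(\mathcal S) \le \bar C_f$, giving $\tilde C_f \le \bar C_f + (2m+1)\bar C_f + (2m+1)\bar C_f$; this overshoots, so the release-time term and one of the $+1$'s must be shared — the precise constant-chasing is exactly the delicate part.

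The main obstacle, therefore, is getting the single-core coflow completion-time bound stated with the right constants: on one switch, a coflow's finish time under the $\bar{C}$-ordered list schedule is \emph{not} simply busiest-input-load plus busiest-output-load (that would need a matching argument), so I would instead invoke the per-flow congestion bound used in Shafiee--Ghaderi, namely that flow $(i,j,f)$ finishes by $r_{max} + \sum_{k \le f}(\text{input-}i\text{ load of }k) + \sum_{k\le f}(\text{output-}j\text{ load of }k)$ within core $h^*$, then take the max over $i,j \in$ coflow $f$'s support. Feeding the averaging inequality ($\le \tfrac1m d(S_i)+\tfrac1m d(S_j)$ for the parts not from $f$) and Lemmas~\ref{lem:lem11}--\ref{lem:lem22} ($d(S_i),d(S_j)\le 2m\bar C_f$) into that, plus constraints~(\ref{incoflow:a})--(\ref{incoflow:b}) to charge the self-terms $L_{if}, L_{jf}$ and the release time, yields $\tilde C_f \le \bar C_f + 2\bar C_f + 2m\bar C_f + 2m\bar C_f = (4m+3)\bar C_f$ in the naive count; tightening the constant to $4m+1$ requires observing that the release-time bound $r_{max}(\mathcal S)\le\bar C_f$ and the self-load bounds $L_{if}\le\bar C_f$, $L_{jf}\le\bar C_f$ are redundant with each other in the right grouping, i.e.\ reusing constraint~(\ref{incoflow:a}) which already bundles $r_k + L_{ik}$ into a single $\bar C_f$. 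Carefully organizing that bundling is where the real work lies; everything else is the same template as Lemma~\ref{lem:lem3}.
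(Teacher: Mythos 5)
Your proposal has two genuine problems. First, its central structural claim --- that because line 6 of Algorithm~\ref{Alg2} minimizes $\max_{i,j}\left(load_{I}(i,h)+load_{O}(j,h)+L_{ik}+L_{jk}\right)$, the chosen core $h^*$ satisfies $load_{I}(i,h^*)+load_{O}(j,h^*)\leq \frac{1}{m}\sum_{k}L_{ik}+\frac{1}{m}\sum_{k}L_{jk}$ for every pair $(i,j)$ --- is unjustified and false in general. Minimizing a maximum over port pairs gives no per-pair averaging guarantee: the argmin can be driven by ports unrelated to the specific link $(i,j)$ you later care about (for instance when the new coflow has one dominant flow elsewhere that makes all cores look alike), so the chosen core may carry far more than a $\frac{1}{m}$ fraction of the earlier load on $(i,j)$; a minimum of per-core maxima is not bounded by the maximum of per-core averages, and indivisibility is exactly what blocks the port-by-port list-scheduling argument. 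Tellingly, if your averaging step were valid, the rest of your argument would yield an $O(1)$-approximation for the indivisible case, far stronger than the $4m+1$ being proved --- a clear warning sign. Second, even after you retreat from the averaging step, your accounting stops at $(4m+3)\bar{C}_f$ and defers the tightening to an unspecified ``bundling'' of (\ref{incoflow:a})--(\ref{incoflow:b}); as written, the proof of the stated constant is incomplete.

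The paper's proof needs neither the averaging claim nor any separate charging of the self-terms. Let $(i,j)$ be the link carrying the last-finishing flow of coflow $f$ and let $S=\{1,\ldots,f\}$. Within the core serving $f$, link $(i,j)$ is busy from $r_{\max}(S)$ until that flow starts, and only coflows of $S$ assigned to that core contribute there, so crudely
\begin{equation*}
\tilde{C}_{f}\ \leq\ r_{\max}(S)+\sum_{k\in S}L_{ik}+\sum_{k\in S}L_{jk},
\end{equation*}
where the sums already include $k=f$, hence $L_{if}$ and $L_{jf}$ are already counted. Since $S=\{1,\ldots,f\}$, Lemmas~\ref{lem:lem11} and \ref{lem:lem22} give $\sum_{k\in S}L_{ik}\leq 2m\bar{C}_{f}$ and $\sum_{k\in S}L_{jk}\leq 2m\bar{C}_{f}$, and $r_{\max}(S)\leq\bar{C}_{f}$, which yields $(4m+1)\bar{C}_{f}$ directly. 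The observation you were missing for the constant is precisely that the self-loads belong inside the sums to which Lemmas~\ref{lem:lem11}--\ref{lem:lem22} are applied, rather than being charged separately against $\bar{C}_{f}$ via (\ref{incoflow:a})--(\ref{incoflow:b}); no load-balancing property of line 6 is used anywhere in the bound.
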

\begin{proof}
Assume the last completed flow in coflow $f$ is sent via link $(i, j)$. Let $S = \left\{1, \ldots, f\right\}$ and $r_{max}(S)=\max_{k\in S} r_k$. Consider the schedule induced by the coflows $S$. Since all links $(i, j)$ in the network cores are busy from $r_{max}(S)$ to the start of the last completed flow in coflow $f$, we have
\begin{eqnarray*}
\tilde{C}_{f} & \leq & r_{max}(S) + \sum_{k\in S} \left(L_{ik}+L_{jk}\right) \\
              & \leq & \bar{C}_{f} + \sum_{k\in S} \left(L_{ik}+L_{jk}\right) \\
							& \leq & (4m+1)\bar{C}_{f}
\end{eqnarray*}
The second inequality is due to $\bar{C}_{f}\geq \bar{C}_{k}$ for all $k\in S$, we have $\bar{C}_{f}\geq r_{max}(S)$. The last inequality is based on lemma~\ref{lem:lem11} and lemma~\ref{lem:lem22}.
\end{proof}

According to lemma~\ref{lem:lem33}, we have the following theorem:
\begin{thm}\label{thm:thm11}
The coflow-driven-list-scheduling has an approximation ratio of, at most, $4m+1$.
\end{thm}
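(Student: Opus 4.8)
The plan is to obtain Theorem~\ref{thm:thm11} as an immediate consequence of the per-coflow guarantee in Lemma~\ref{lem:lem33}, combined with the fact that the linear program (\ref{incoflow:main}) is a valid relaxation of the optimal coflow schedule. Let $\mathrm{OPT}$ denote the optimal total weighted completion time and let $C_k^{\star}$ be the completion time of coflow $k$ in some fixed optimal schedule. First I would argue that the vector $(C_k^{\star})_{k\in\mathcal{K}}$ is feasible for (\ref{incoflow:main}): constraints (\ref{incoflow:a}) and (\ref{incoflow:b}) hold because coflow $k$ cannot begin before $r_k$ and needs at least $L_{ik}$ (resp.\ $L_{jk}$) time units of service on input port $i$ (resp.\ output port $j$), irrespective of which core it is assigned to; constraints (\ref{incoflow:c}) and (\ref{incoflow:d}) are the standard parallel-machine (Queyranne/Hall-type) inequalities, valid here because the $m$ cores together move at most $m$ units of data per time unit through any fixed port. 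Hence, since $\bar{C}_1,\dots,\bar{C}_n$ is an optimal LP solution, $\sum_{k\in\mathcal{K}} w_k \bar{C}_k \le \sum_{k\in\mathcal{K}} w_k C_k^{\star} = \mathrm{OPT}$.

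Next, let $\tilde{C}_k$ be the completion time of coflow $k$ in the schedule produced by coflow-driven-list-scheduling. By Lemma~\ref{lem:lem33}, $\tilde{C}_k \le (4m+1)\bar{C}_k$ for every coflow $k$ (recall $n=|\mathcal{K}|$ and the lemma's bound is stated coflow-by-coflow). Multiplying by the nonnegative weight $w_k$ and summing,
\[
\sum_{k\in\mathcal{K}} w_k \tilde{C}_k \;\le\; (4m+1)\sum_{k\in\mathcal{K}} w_k \bar{C}_k \;\le\; (4m+1)\,\mathrm{OPT},
\]
which is precisely the claimed bound, and the theorem follows.

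The only point that carries any content beyond bookkeeping is the claim in the first paragraph that (\ref{incoflow:c})--(\ref{incoflow:d}) hold for every feasible schedule, so that is where I would be careful: fix an input port $i$ and a subset $S\subseteq\mathcal{K}$, relabel so that $C_1^{\star}\le\cdots\le C_{|S|}^{\star}$, note that all of the data $\sum_{k\le f} L_{ik}$ must pass through port $i$ by time $C_f^{\star}$ so that $C_f^{\star}\ge \frac{1}{m}\sum_{k\le f} L_{ik}$, and then combine these across $f\in S$ with the elementary identity $\sum_{f\in S} L_{if}\sum_{k\le f} L_{ik} = \frac{1}{2}\bigl((\sum_{k\in S} L_{ik})^2 + \sum_{k\in S} L_{ik}^2\bigr)$ to get the quadratic lower bound; the output-port inequality (\ref{incoflow:d}) is symmetric. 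Everything else — optimality of $\bar{C}$ and monotonicity of the weighted summation — is routine, so I do not anticipate a real obstacle here.
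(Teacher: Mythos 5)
Your proposal is correct and follows essentially the same route as the paper: the theorem is obtained directly from the per-coflow bound $\tilde{C}_k \le (4m+1)\bar{C}_k$ of Lemma~\ref{lem:lem33}, combined with the observation that (\ref{incoflow:main}) is a valid relaxation so that $\sum_k w_k \bar{C}_k \le \mathrm{OPT}$. The only difference is that the paper leaves the relaxation-validity step (feasibility of the optimal schedule's completion times, in particular the Queyranne-type constraints (\ref{incoflow:c})--(\ref{incoflow:d}) with the $\frac{1}{2m}$ factor from the $m$ parallel links per port) implicit, whereas you spell it out correctly.
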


Similar to the case of divisible coflow, we have the following lemma:
\begin{lem}\label{lem:lem44}
Let $\bar{C}_{1}\leq \cdots\leq \bar{C}_{n}$ be an optimal solution to the linear program (\ref{incoflow:main}), and let $\tilde{C}_{1}, \ldots, \tilde{C}_{n}$ denote the completion times in the schedule found by coflow-driven-list-scheduling. For each $f=1, \ldots, n$,
\begin{eqnarray*}
\tilde{C}_{f}\leq 4m\bar{C}_{f}.
\end{eqnarray*}
when all coflows are released at time zero.
\end{lem}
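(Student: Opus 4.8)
\emph{Proof proposal.} The plan is to reuse the argument of Lemma~\ref{lem:lem33} essentially verbatim, with the single simplification that all release times are zero, so the term $r_{\max}(S)$ does not appear (rather than being bounded by $\bar{C}_f$). First I would fix a coflow $f$, let $(i,j)$ be the link carrying the last flow of coflow $f$ to complete in the schedule produced by coflow-driven-list-scheduling, and let $h$ be the core to which $f$ was assigned in lines~5--9 of Algorithm~\ref{Alg2}. Put $S=\{1,\dots,f\}$.

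The key structural step is the ``busy link'' observation: on core $h$, the link $(i,j)$ is never idle from time $0$ until the last flow of coflow $f$ on that link begins, because the list rule immediately assigns an eligible released flow to an idle link; and the only flows that can occupy link $(i,j)$ on core $h$ before that flow of $f$ belong to coflows preceding $f$ in the $\bar{C}$-ordering, i.e.\ coflows in $S$. Hence the finish time of that flow is at most the total volume those coflows route through input port $i$ plus through output port $j$ on core $h$, and therefore at most the corresponding totals over all cores:
\begin{eqnarray*}
\tilde{C}_{f}\ \leq\ \sum_{k\in S}\bigl(L_{ik}+L_{jk}\bigr)\ =\ \sum_{k\in S}L_{ik}+\sum_{k\in S}L_{jk}.
\end{eqnarray*}

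Finally, since $\bar{C}_1\le\cdots\le\bar{C}_n$ is feasible for~(\ref{incoflow:main}), constraints~(\ref{incoflow:c}) and~(\ref{incoflow:d}) hold, so Lemma~\ref{lem:lem11} at input $i$ gives $\sum_{k\in S}L_{ik}\le 2m\,\bar{C}_f$ and Lemma~\ref{lem:lem22} at output $j$ gives $\sum_{k\in S}L_{jk}\le 2m\,\bar{C}_f$; adding these yields $\tilde{C}_f\le 4m\,\bar{C}_f$. The only delicate point — and it is the same one as in Lemma~\ref{lem:lem33} and in Hall \emph{et al.}~\cite{Leslie1997} — is justifying the no-idle claim, namely that link $(i,j)$ on core $h$ stays continuously busy with data of coflows in $S$ up to the start of that last flow of $f$; this rests on (i) an idle link being given work whenever an eligible flow exists and (ii) no flow of a coflow indexed after $f$ ever preceding it on that link, which follows from the $\bar{C}$-order used when listing flows. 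With zero release times this is even a bit simpler than in Lemma~\ref{lem:lem33}, since there is no initial waiting phase; everything else is a direct specialization.
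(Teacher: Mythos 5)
Your proposal is correct and follows essentially the same route as the paper, which proves this lemma by specializing Lemma~\ref{lem:lem33}: drop the $r_{\max}(S)$ term since all release times are zero, bound $\tilde{C}_f$ by the busy-link volume $\sum_{k\in S}(L_{ik}+L_{jk})$, and apply Lemmas~\ref{lem:lem11} and~\ref{lem:lem22} to get the factor $4m$. Your care in restricting the busy period to coflows of $S$ assigned to core $h$ (then relaxing to all of $S$) matches the intended argument exactly.
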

\begin{proof}
The proof is similar to that of lemmas~\ref{lem:lem4} and \ref{lem:lem33}.
\end{proof}

According to lemma~\ref{lem:lem44}, we have the following theorem:
\begin{thm}\label{thm:thm22}
For the special case when all coflows are released at time zero, the coflow-driven-list-scheduling has an approximation ratio of, at most, $4m$.
\end{thm}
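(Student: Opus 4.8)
\textbf{Proof proposal for Theorem~\ref{thm:thm22}.}
The plan is to obtain the theorem from Lemma~\ref{lem:lem44} by the standard linear-programming accounting argument, so essentially all of the work reduces to two steps: (i) confirming that the linear program~(\ref{incoflow:main}), specialized to $r_k = 0$ for all $k$, is a valid relaxation of the indivisible coflow scheduling problem, and (ii) summing the per-coflow bound of Lemma~\ref{lem:lem44} against the weights.

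For step (i) I would take any feasible schedule of the coflows on the $m$ identical network cores and show that its completion-time vector $(C_k)_{k\in\mathcal{K}}$ satisfies all constraints of~(\ref{incoflow:main}). Constraints~(\ref{incoflow:a}) and~(\ref{incoflow:b}) are immediate with $r_k=0$, since coflow $k$ requires at least $L_{ik}$ (resp.\ $L_{jk}$) time units of unit-speed service through the $i$-th input (resp.\ $j$-th output) port on whichever core carries it. For~(\ref{incoflow:c}), fix an input port $i$ and a subset $S\subseteq\mathcal{K}$: across the $m$ cores there are $m$ unit-speed copies of input port $i$, hence the aggregate port-$i$ data that can be completed by time $t$ is at most $mt$, so this resource behaves no better than a single speed-$m$ machine. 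Applying the classical single-machine inequality $\sum_{k\in S} p_k C_k \ge \tfrac12\bigl((\sum_{k\in S} p_k)^2 + \sum_{k\in S} p_k^2\bigr)$ of Hall \emph{et al.}~\cite{Leslie1997} with ``processing times'' $p_k = L_{ik}$ and machine speed $m$ yields exactly~(\ref{incoflow:c}); constraint~(\ref{incoflow:d}) is the symmetric statement for output ports. Consequently the optimal value of~(\ref{incoflow:main}) lower-bounds the optimal weighted completion time $\mathrm{OPT}$, i.e. $\sum_{k\in\mathcal{K}} w_k \bar{C}_k \le \mathrm{OPT}$.

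For step (ii), let $\tilde{C}_k$ be the completion times produced by coflow-driven-list-scheduling on the order induced by $\bar{C}$. Since every coflow is released at time zero, Lemma~\ref{lem:lem44} gives $\tilde{C}_k \le 4m\,\bar{C}_k$ for every $k$. Multiplying by $w_k \ge 0$ and summing over $\mathcal{K}$,
\[
\sum_{k\in\mathcal{K}} w_k \tilde{C}_k \;\le\; 4m \sum_{k\in\mathcal{K}} w_k \bar{C}_k \;\le\; 4m\cdot\mathrm{OPT},
\]
which is precisely the asserted $4m$-approximation.

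The only place that needs care is step (i): one must justify the ``speed-$m$ per port'' relaxation even though an indivisible coflow cannot in fact be split across cores. The resolution is that feasibility of~(\ref{incoflow:main}) only demands that the inequalities hold for the \emph{optimal integral} schedule, and in any such schedule the $m$ appearances of a given port are genuinely $m$ unit-speed resources whose union is dominated by a single speed-$m$ machine; the indivisibility constraint only restricts the algorithm's schedule, not the relaxation, so the lower bound stands. I expect this to be the main (mild) obstacle; the rest mirrors the wrap-ups of Theorems~\ref{thm:thm1} and~\ref{thm:thm11}, and the write-up can defer to those for the repeated details.
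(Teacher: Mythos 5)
Your proposal is correct and follows essentially the same route as the paper: Theorem~\ref{thm:thm22} is obtained directly from the per-coflow bound of Lemma~\ref{lem:lem44} by weighting and summing, with the LP~(\ref{incoflow:main}) serving as a lower bound on the optimum. The only difference is that you explicitly verify the validity of the relaxation (the speed-$m$ per-port argument), which the paper leaves implicit by citing Hall \emph{et al.}; that verification is sound and does not change the argument.
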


\section{Results and Discussion}\label{sec:Results}
This section compares the approximation ratio of the proposed algorithms to that of the previous algorithms. 
We consider two problems: scheduling a single coflow problem and scheduling mutiple coflow problem.
In the scheduling a single coflow problem, we compares with the algorithm in Huang \textit{et al.}~\cite{Huang2020}.
In the scheduling mutiple coflow problem, we first use Shafiee and Ghaderi's algorithm to obtain the order of coflows.
According to this order, the algorithm in \cite{Huang2020} distributes each coflow to the identical parallel network.
Therefore, this method achieves an approximation ratio of $5m$ with arbitrary release times, and an approximation ratio of $4m$ without release time.

\begin{figure}[!ht]
    \centering
        \includegraphics[width=5.4in]{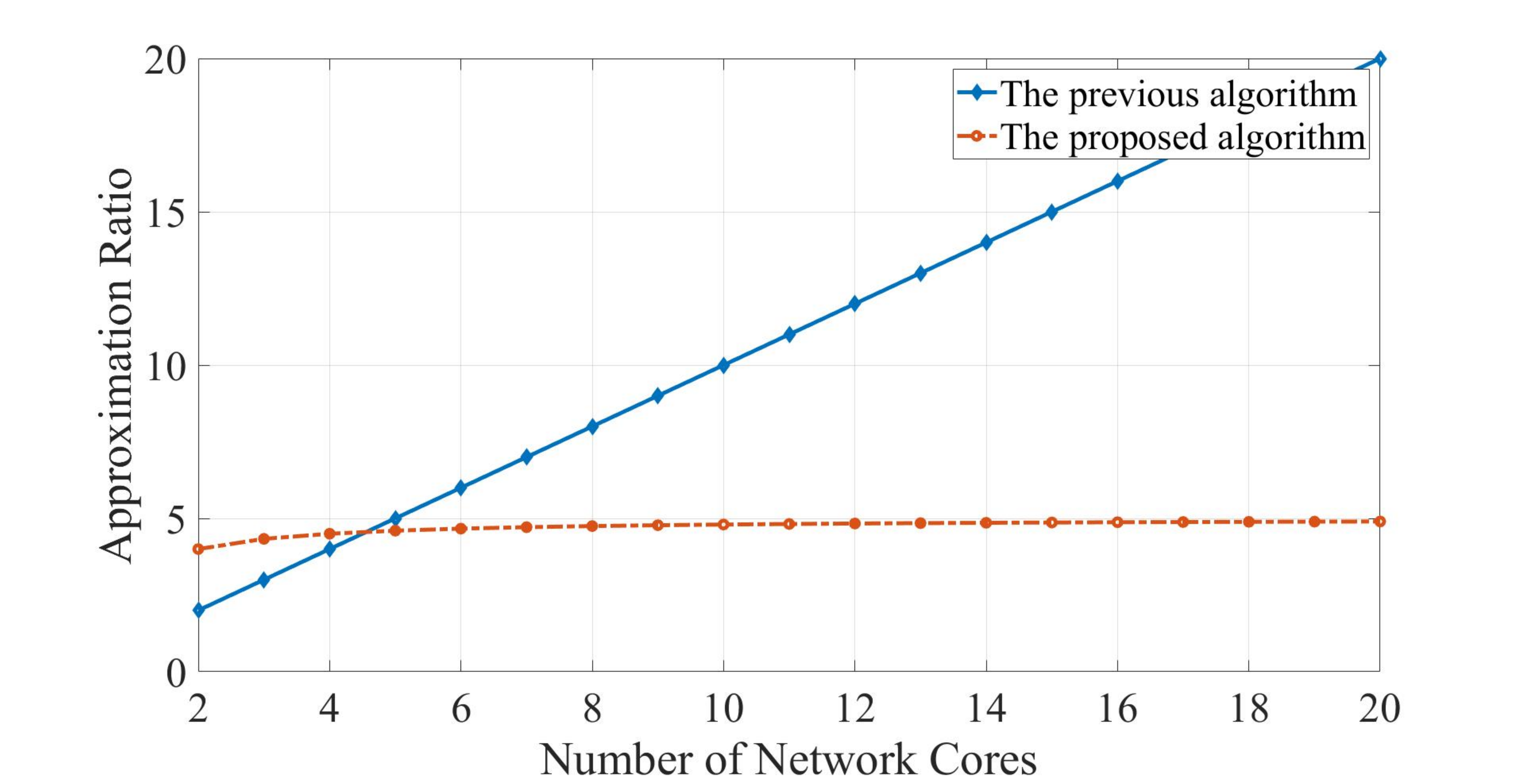}
    \caption{The approximation ratio between the algorithm in \cite{Huang2020} and the proposed algorithm.}
    \label{fig:ratio1}
\end{figure}

\begin{figure}[!ht]
    \centering
        \includegraphics[width=5.4in]{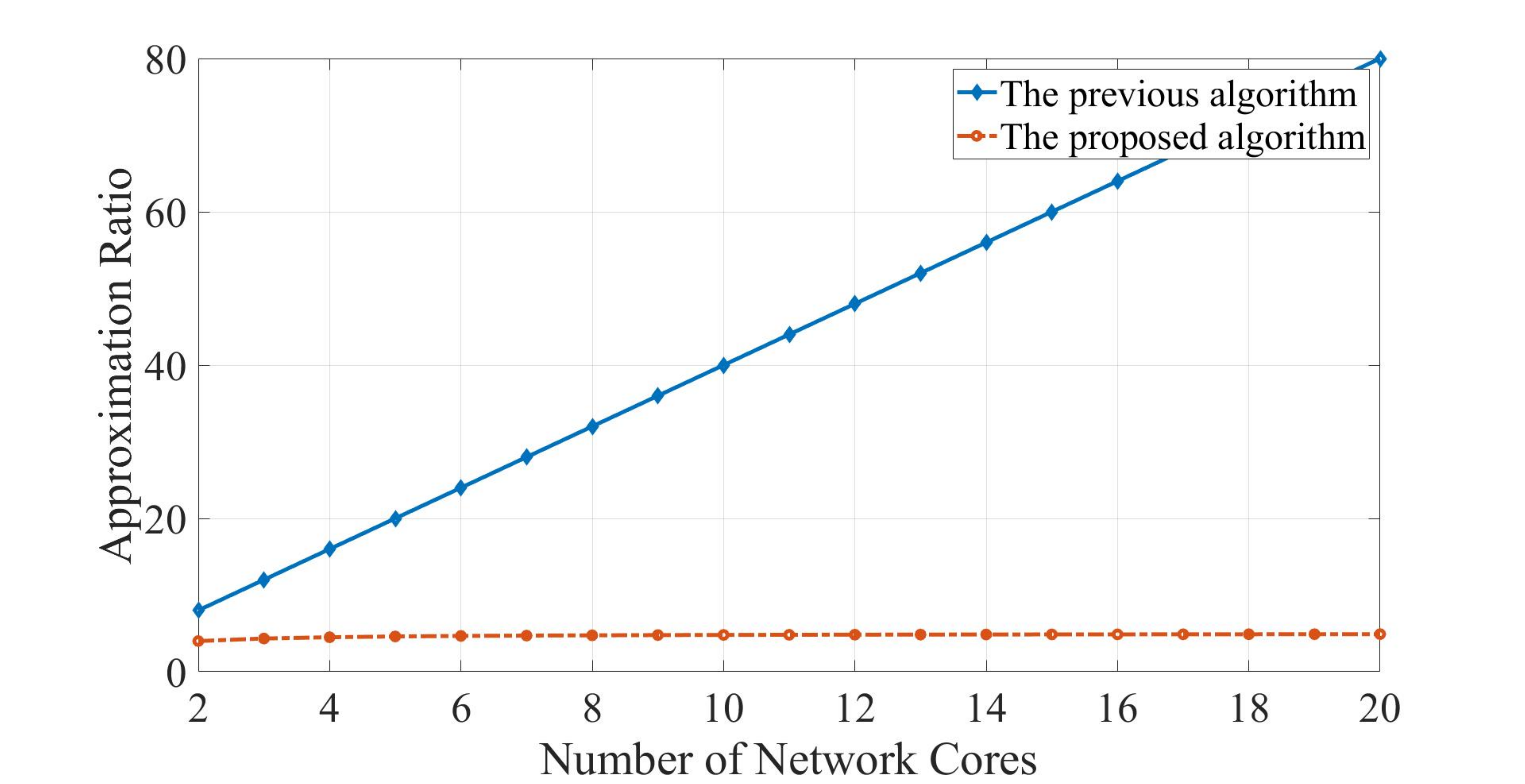}
    \caption{The approximation ratio between the algorithm in \cite{shafiee2018improved, Huang2020} and the proposed algorithm.}
    \label{fig:ratio2}
\end{figure}

\begin{figure}[!ht]
    \centering
        \includegraphics[width=5.4in]{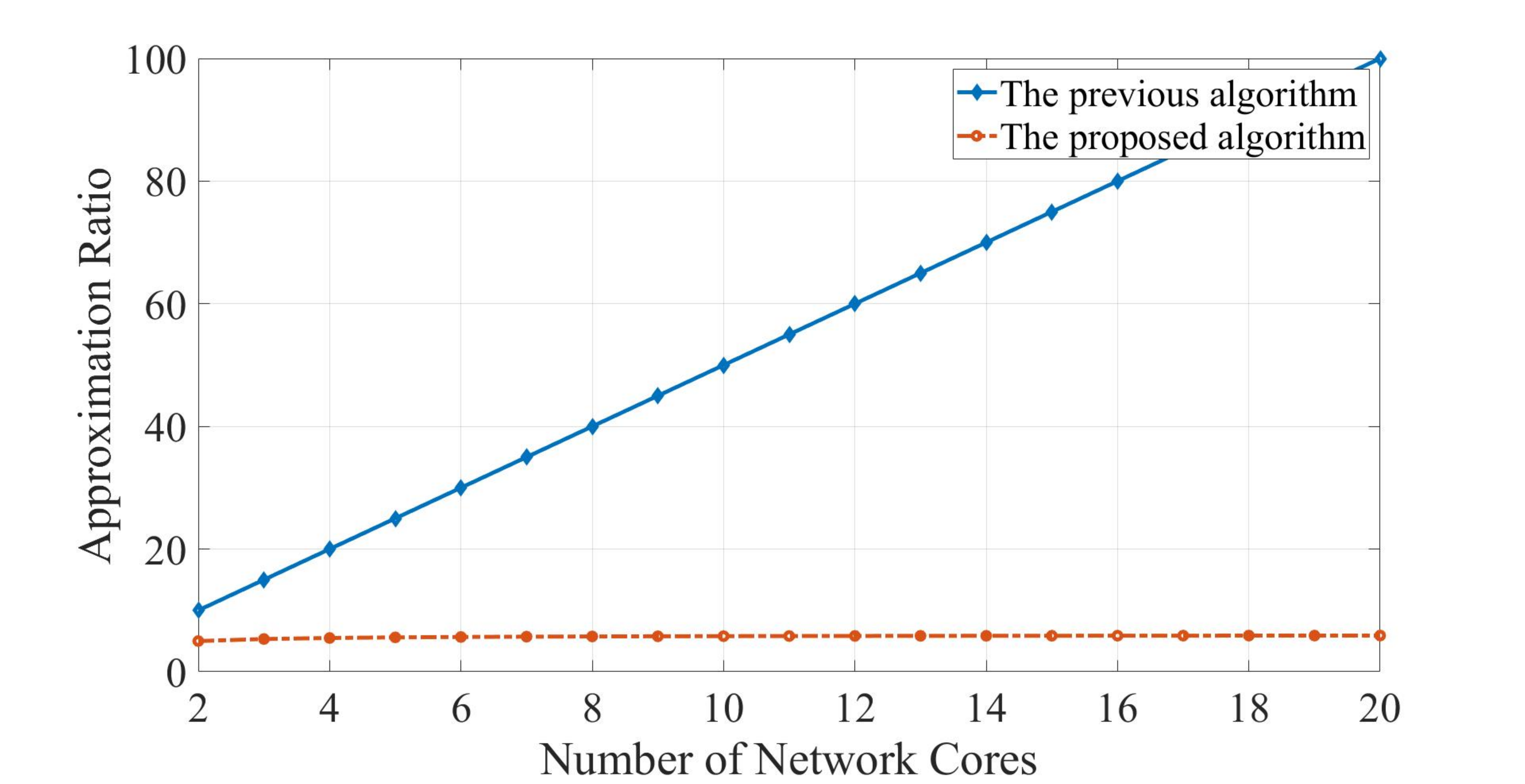}
    \caption{When all coflows are released at time zero, the approximation ratio between the algorithm in \cite{shafiee2018improved, Huang2020} and the proposed algorithm.}
    \label{fig:ratio3}
\end{figure}

In the scheduling a single coflow problem, figure~\ref{fig:ratio1} presents the numerical results concerning the approximation ratio of algorithms.
When $m\rightarrow \infty$, the approximation ratio of proposed algorithm tends to 5.
When $m\geq 5$, the proposed algorithm outperforms the algorithm in \cite{Huang2020}.
Figure~\ref{fig:ratio2} and figure~\ref{fig:ratio3} separately present the numerical results concerning the approximation ratio of algorithms in the scheduling mutiple coflow problem with arbitrary release times and no release times. The proposed algorithm outperforms the previous algorithm in all cases

\section{Concluding Remarks}\label{sec:Conclusion}
With recent technological developments, the single-core model is no longer sufficient. Therefore, we consider the identical parallel network, which is an architecture based on multiple network cores running in parallel. This paper develops two polynomial-time approximation algorithms to solve the coflow scheduling problem in identical parallel networks. Coflow can be considered as divisible or indivisible. Considering the divisible coflow scheduling problem, the proposed algorithm achieves an approximation ratio of $6-\frac{2}{m}$ with arbitrary release times, and an approximation ratio of $5-\frac{2}{m}$ without release time. When coflow is indivisible, the proposed algorithm achieves an approximation ratio of $4m+1$ with arbitrary release times, and an approximation ratio of $4m$ without release time.


\begin{thebibliography}{10}
\providecommand{\url}[1]{#1}
\csname url@rmstyle\endcsname
\providecommand{\newblock}{\relax}
\providecommand{\bibinfo}[2]{#2}
\providecommand\BIBentrySTDinterwordspacing{\spaceskip=0pt\relax}
\providecommand\BIBentryALTinterwordstretchfactor{4}
\providecommand\BIBentryALTinterwordspacing{\spaceskip=\fontdimen2\font plus
\BIBentryALTinterwordstretchfactor\fontdimen3\font minus
  \fontdimen4\font\relax}
\providecommand\BIBforeignlanguage[2]{{%
\expandafter\ifx\csname l@#1\endcsname\relax
\typeout{** WARNING: IEEEtran.bst: No hyphenation pattern has been}%
\typeout{** loaded for the language `#1'. Using the pattern for}%
\typeout{** the default language instead.}%
\else
\language=\csname l@#1\endcsname
\fi
#2}}

\bibitem{Agarwal2018}
S.~Agarwal, S.~Rajakrishnan, A.~Narayan, R.~Agarwal, D.~Shmoys, and A.~Vahdat,
  ``Sincronia: Near-optimal network design for coflows,'' in \emph{Proceedings
  of the 2018 ACM Conference on SIGCOMM}, ser. SIGCOMM '18.\hskip 1em plus
  0.5em minus 0.4em\relax New York, NY, USA: Association for Computing
  Machinery, 2018, p. 16–29.


\bibitem{ahmadi2020scheduling}
S.~Ahmadi, S.~Khuller, M.~Purohit, and S.~Yang, ``On scheduling coflows,''
  \emph{Algorithmica}, vol.~82, no.~12, pp. 3604--3629, 2020.


\bibitem{al2008scalable}
M.~Al-Fares, A.~Loukissas, and A.~Vahdat, ``A scalable, commodity data center
  network architecture,'' \emph{ACM SIGCOMM computer communication review},
  vol.~38, no.~4, pp. 63--74, 2008.


\bibitem{Bansal2010}
N.~Bansal and S.~Khot, ``Inapproximability of hypergraph vertex cover and
  applications to scheduling problems,'' in \emph{Automata, Languages and
  Programming}, S.~Abramsky, C.~Gavoille, C.~Kirchner, F.~Meyer auf~der Heide,
  and P.~G. Spirakis, Eds.\hskip 1em plus 0.5em minus 0.4em\relax Berlin,
  Heidelberg: Springer Berlin Heidelberg, 2010, pp. 250--261.


\bibitem{Chowdhury2012}
M.~Chowdhury and I.~Stoica, ``Coflow: A networking abstraction for cluster
  applications,'' in \emph{Proceedings of the 11th ACM Workshop on Hot Topics
  in Networks}, ser. HotNets-XI.\hskip 1em plus 0.5em minus 0.4em\relax New
  York, NY, USA: Association for Computing Machinery, 2012, p. 31–36.


\bibitem{Chowdhury2015}
------, ``Efficient coflow scheduling without prior knowledge,'' in
  \emph{Proceedings of the 2015 ACM Conference on SIGCOMM}, ser. SIGCOMM
  '15.\hskip 1em plus 0.5em minus 0.4em\relax New York, NY, USA: Association
  for Computing Machinery, 2015, p. 393–406.


\bibitem{Chowdhury2014}
M.~Chowdhury, Y.~Zhong, and I.~Stoica, ``Efficient coflow scheduling with
  varys,'' in \emph{Proceedings of the 2014 ACM Conference on SIGCOMM}, ser.
  SIGCOMM '14.\hskip 1em plus 0.5em minus 0.4em\relax New York, NY, USA:
  Association for Computing Machinery, 2014, p. 443–454.


\bibitem{Dean2008}
J.~Dean and S.~Ghemawat, ``Mapreduce: Simplified data processing on large
  clusters,'' \emph{Communications of the ACM}, vol.~51, no.~1, p. 107–113,
  jan 2008.


\bibitem{Leslie1997}
L.~A. Hall, A.~S. Schulz, D.~B. Shmoys, and J.~Wein, ``Scheduling to minimize
  average completion time: Off-line and on-line approximation algorithms,''
  \emph{Mathematics of Operations Research}, vol.~22, no.~3, pp. 513--544,
  1997.


\bibitem{Huang2020}
X.~S. Huang, Y.~Xia, and T.~S.~E. Ng, ``Weaver: Efficient coflow scheduling in
  heterogeneous parallel networks,'' in \emph{2020 IEEE International Parallel
  and Distributed Processing Symposium (IPDPS)}, 2020, pp. 1071--1081.


\bibitem{khuller2016brief}
S.~Khuller and M.~Purohit, ``Brief announcement: Improved approximation
  algorithms for scheduling co-flows,'' in \emph{Proceedings of the 28th ACM
  Symposium on Parallelism in Algorithms and Architectures}, 2016, pp.
  239--240.


\bibitem{Qiu2015}
Z.~Qiu, C.~Stein, and Y.~Zhong, ``Minimizing the total weighted completion time
  of coflows in datacenter networks,'' in \emph{Proceedings of the 27th ACM
  Symposium on Parallelism in Algorithms and Architectures}, ser. SPAA
  '15.\hskip 1em plus 0.5em minus 0.4em\relax New York, NY, USA: Association
  for Computing Machinery, 2015, p. 294–303.


\bibitem{Sachdeva2013}
S.~Sachdeva and R.~Saket, ``Optimal inapproximability for scheduling problems
  via structural hardness for hypergraph vertex cover,'' in \emph{2013 IEEE
  Conference on Computational Complexity}, 2013, pp. 219--229.


\bibitem{shafiee2018improved}
M.~Shafiee and J.~Ghaderi, ``An improved bound for minimizing the total
  weighted completion time of coflows in datacenters,'' \emph{IEEE/ACM
  Transactions on Networking}, vol.~26, no.~4, pp. 1674--1687, 2018.


\bibitem{Shvachko2010}
K.~Shvachko, H.~Kuang, S.~Radia, and R.~Chansler, ``The hadoop distributed file
  system,'' in \emph{2010 IEEE 26th Symposium on Mass Storage Systems and
  Technologies (MSST)}, 2010, pp. 1--10.


\bibitem{Singh2015}
A.~Singh, J.~Ong, A.~Agarwal, G.~Anderson, A.~Armistead, R.~Bannon, S.~Boving,
  G.~Desai, B.~Felderman, P.~Germano, A.~Kanagala, J.~Provost, J.~Simmons,
  E.~Tanda, J.~Wanderer, U.~H\"{o}lzle, S.~Stuart, and A.~Vahdat, ``Jupiter
  rising: A decade of clos topologies and centralized control in google's
  datacenter network,'' in \emph{Proceedings of the 2015ACM Conference on
  SIGCOMM}, ser. SIGCOMM '15.\hskip 1em plus 0.5em minus 0.4em\relax New York,
  NY, USA: Association for Computing Machinery, 2015, p. 183–197.


\bibitem{zaharia2010spark}
M.~Zaharia, M.~Chowdhury, M.~J. Franklin, S.~Shenker, and I.~Stoica, ``Spark:
  Cluster computing with working sets,'' in \emph{2nd USENIX Workshop on Hot
  Topics in Cloud Computing (HotCloud 10)}, 2010.


\bibitem{Zhang2016}
H.~Zhang, L.~Chen, B.~Yi, K.~Chen, M.~Chowdhury, and Y.~Geng, ``Coda: Toward
  automatically identifying and scheduling coflows in the dark,'' in
  \emph{Proceedings of the 2016 ACM Conference on SIGCOMM}, ser. SIGCOMM
  '16.\hskip 1em plus 0.5em minus 0.4em\relax New York, NY, USA: Association
  for Computing Machinery, 2016, p. 160–173.


\bibitem{zhao2015rapier}
Y.~Zhao, K.~Chen, W.~Bai, M.~Yu, C.~Tian, Y.~Geng, Y.~Zhang, D.~Li, and
  S.~Wang, ``Rapier: Integrating routing and scheduling for coflow-aware data
  center networks,'' in \emph{2015 IEEE Conference on Computer Communications
  (INFOCOM)}.\hskip 1em plus 0.5em minus 0.4em\relax IEEE, 2015, pp. 424--432.


\end{thebibliography}
\end{document}